\newcommand{\emphX}[1]{\textbf{#1}}
\newcommand{\induce}[2]{\mbox{$ #1 \langle #2 \rangle$}}
\newcommand{\dom}{\mbox{$\rightarrow$}}
\newcommand{\be}{\begin{enumerate}}
\newcommand{\ee}{\end{enumerate}}
\newcommand{\bd}{\begin{description}}
\newcommand{\ed}{\end{description}}
\newcommand{\beq}{\begin{equation}}
\newcommand{\eeq}{\end{equation}}
\newcommand{\2}{\vspace{2mm}}
\renewenvironment{proof}[1][]{\par \noindent {\bf Proof#1}.\ }{\hfill$\Box$
\par \vspace{11pt}}
\newenvironment{subproof}{\par\noindent {\it Subproof}.\ }{\hfill$\lozenge$\par\vspace{11pt}}
\newtheorem{theorem}{Theorem}[section]
\newtheorem{lemma}[theorem]{Lemma}
\newtheorem{proposition}[theorem]{Proposition}
\newtheorem{claim}{Claim}[theorem]
\theoremstyle{definition}
\newtheorem{problem}[theorem]{Problem}
\begin{document}
\bibliographystyle{plain}
%\pagewiselinenumbering
%\setpagewiselinenumbers
%\modulolinenumbers[1]
%\linenumbers

\title{Out-degree reducing partitions of digraphs}
\author{J. Bang-Jensen$^1$, S. Bessy$^2$, F. Havet$^3$ and A. Yeo$^{1,4}$}
%\date{}
\maketitle

\begin{center}
{\small $^1$ Department of Mathematics and Computer Science, University of Southern 
Denmark, Odense DK-5230, Denmark, Email: \verb!{jbj,yeo}@imada.sdu.dk!. Financial support: Danish research council, grant number 1323-00178B and Labex UCN@Sophia\\
$^2$ LIRMM, Universit\'e de Montpellier, Montpellier, France.
Email: \verb!stephane.bessy@lirmm.fr!. \\Financial support: OSMO project, 
 Occitanie regional council.\\
$^3$ Universit\'e C\^ote d'Azur, CNRS, I3S  and INRIA, Sophia Antipolis, France.
Email: \verb!frederic.havet@cnrs.fr!. \\
Financial support: ANR-13-BS02-0007 STINT.\\
$^4$ Department of Mathematics,  University of Johannesburg,  Auckland Park, 2006 South Africa. }
\end{center}

\begin{abstract}
Let $k$ be a fixed integer. We determine the complexity of finding a
$p$-partition $(V_1, \dots, V_p)$ of the vertex set of a given digraph such
that the maximum out-degree of each of the digraphs induced by $V_i$, ($1\leq i\leq p$)
is at least $k$ smaller than the maximum
out-degree of $D$. We show that this problem is polynomial-time
solvable when $p\geq 2k$ and ${\cal NP}$-complete otherwise. The result for
$k=1$ and $p=2$ answers a question posed in \cite{bangTCS636}.  We also
determine, for all fixed non-negative integers $k_1,k_2,p$, the
complexity of deciding whether a given digraph of maximum
  out-degree $p$ has a $2$-partition $(V_1,V_2)$ such that the
digraph induced by $V_i$ has maximum out-degree at most $k_i$ for
$i\in [2]$. It follows from this characterization that the problem of
deciding whether a digraph has a 2-partition $(V_1,V_2)$ such that
each vertex $v\in V_i$ has at least as many neighbours in the set
$V_{3-i}$ as in $V_i$, for $i=1,2$ is ${\cal NP}$-complete. This
solves a problem from \cite{kreutzerEJC24} on majority colourings.  

\medskip

\noindent{}{\bf Keywords:} $2$-partition, maximum out-degree reducing
partition, ${\cal NP}$-complete, polynomial algorithm.
\end{abstract}

\section{Introduction}
%%%%%%%%%%%

A {\bf $p$-partition} of a graph or digraph $G$ is a vertex partition
$(V_1,V_2,\ldots{},V_p)$ of its vertex set $V(G)$. 

It is a well-known and easy fact that every undirected graph $G$
admits $2$-partition such that the degree of each vertex in its part
is at most half of its degree in $G$ and such a partition can be found
by a greedy algorithm (or by considering a maximum-cut partition). So
we have the following.

\begin{proposition}~\
\label{GreduceDelta}
\begin{itemize}
\item[(i)] Every graph $G$ has a $2$-partition $(V_1,V_2)$ such that
  $d_{G\langle V_i\rangle}(v) \leq d_G(v)/2$ for all $i\in\{1,2\}$ and
  all $v\in V_i$.
\item[(ii)] Every graph $G$ has a $2$-partition $(V_1,V_2)$ with
  $\Delta(\induce{G}{V_i})\leq \Delta(G)/2$ for $i=1,2$.
\end{itemize}
\end{proposition}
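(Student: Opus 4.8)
The plan is to derive both statements from one short extremal argument, noting first that (ii) is an immediate weakening of (i): if $d_{G\langle V_i\rangle}(v)\le d_G(v)/2$ for all $v\in V_i$, then $d_{G\langle V_i\rangle}(v)\le \Delta(G)/2$ for all such $v$, and hence $\Delta(G\langle V_i\rangle)\le \Delta(G)/2$. So it is enough to prove (i).

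For (i) I would take a $2$-partition $(V_1,V_2)$ of $V(G)$ that maximises the number of edges having exactly one endpoint in each part; such a ``maximum cut'' exists because $G$ is finite. (Equivalently, one can run the local-search procedure that, as long as some vertex has strictly more than half of its neighbours on its own side, moves that vertex to the other part; each such move strictly increases the number of crossing edges, a non-negative integer bounded by $|E(G)|$, so the procedure terminates at a partition with no such vertex.) Now fix $i\in\{1,2\}$ and $v\in V_i$, let $a=d_{G\langle V_i\rangle}(v)$, and let $b$ be the number of neighbours of $v$ in $V_{3-i}$, so that $a+b=d_G(v)$. Moving $v$ from $V_i$ to $V_{3-i}$ changes the size of the cut by exactly $a-b$, so maximality forces $a-b\le 0$; hence $2a\le a+b=d_G(v)$, i.e.\ $d_{G\langle V_i\rangle}(v)=a\le d_G(v)/2$, which is (i).

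I do not expect a real obstacle here: this is classical. The only points to be slightly careful about are that $G$ is finite (so that the extremal partition exists, and the local-search variant terminates), and the interplay of the inequality $a-b\le 0$ with the integrality of $a$, which is exactly what yields the clean bound $d_G(v)/2$ rather than $\lceil d_G(v)/2\rceil$. It is also worth noting that a single greedy pass over the vertices (inserting each vertex on the side where it currently has no more neighbours than on the other side) need not guarantee the per-vertex bound, so the maximum-cut / local-search formulation is the one to use.
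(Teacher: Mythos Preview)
Your argument is correct and is exactly the maximum-cut/local-search argument the paper alludes to in the sentence preceding the proposition (the paper does not spell out a proof). Your observation that (ii) follows immediately from (i) is also the intended reading, and your caution about a single greedy pass is a fair clarification of what ``greedy'' should mean here.
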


Thomassen \cite{thomassenEJC6} constructed an infinite class of
strongly connected digraphs ${\cal T}=T_1,T_2,\ldots{},T_k,\ldots{}$
with the property that for each $k$, $T_k$ is $k$-out-regular and has
no even directed cycle. As remarked by Alon in \cite{alonCPC15} this
implies that we cannot expect any directed analogues of the statements
in Proposition~\ref{GreduceDelta}.

\begin{proposition}
\label{monochromatic}
Let $k$ be a positive integer. For every $2$-partition $(V_1,V_2)$ of $T_k$, some
 vertex has all its $k$ out-neighbours in the same part as
 itself, so $\max \{\Delta^+(\induce{D}{V_1}),
 \Delta^+(\induce{D}{V_2})\} = \Delta^+(D)$.
 \end{proposition}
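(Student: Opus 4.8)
The plan is to prove the contrapositive: I will assume that $T_k$ has a $2$-partition $(V_1,V_2)$ in which \emph{every} vertex has at least one out-neighbour in the part not containing it, and from this I will manufacture an even directed cycle in $T_k$, contradicting Thomassen's construction \cite{thomassenEJC6}.

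First I would use the standing assumption to define a function $f\colon V(T_k)\to V(T_k)$ by letting $f(v)$ be an arbitrarily chosen out-neighbour of $v$ that lies in the part opposite to the one containing $v$. Such a choice exists for every vertex $v$, and by construction $v$ and $f(v)$ always lie in different parts; in particular $f(v)\neq v$, and every pair $vf(v)$ is an arc of $T_k$.

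Next, starting from an arbitrary vertex $v_0$, I would follow the walk $v_0,\,v_1=f(v_0),\,v_2=f(v_1),\dots$, each of whose steps $v_iv_{i+1}$ is an arc. Since $V(T_k)$ is finite, some vertex repeats; taking the \emph{first} repetition gives indices $i<j$ with $v_i=v_j$ and $v_i,\dots,v_{j-1}$ pairwise distinct, so $C=v_iv_{i+1}\cdots v_{j-1}v_i$ is a directed cycle of $T_k$. Because each arc of $C$ joins vertices in opposite parts, the vertices of $C$ alternate between $V_1$ and $V_2$ as one traverses $C$; hence $C$ has even length (and length at least $2$). This even directed cycle contradicts the defining property of $T_k$, so the partition assumed above cannot exist.

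Consequently, some vertex $v$ has all $k$ of its out-neighbours in the part $V_i$ containing it (using that $T_k$ is $k$-out-regular). Then $d^+_{\induce{D}{V_i}}(v)=k=\Delta^+(D)$, so $\Delta^+(\induce{D}{V_i})\geq \Delta^+(D)$; combined with the trivial bound $\Delta^+(\induce{D}{V_j})\leq\Delta^+(D)$ for $j\in\{1,2\}$ (deleting vertices cannot raise an out-degree), this yields $\max\{\Delta^+(\induce{D}{V_1}),\Delta^+(\induce{D}{V_2})\}=\Delta^+(D)$, as desired. I do not expect a serious obstacle in this argument; the only point needing care is extracting a genuine directed cycle — not merely a closed walk — from the iteration of $f$, which is precisely why one passes to the first repeated vertex along the walk.
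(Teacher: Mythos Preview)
Your proof is correct and follows essentially the same approach as the paper: the paper's justification is the one-sentence remark that if every vertex had an out-neighbour in the opposite part, then the bipartite digraph induced by the crossing arcs would have minimum out-degree at least $1$ and hence contain an even directed cycle, contradicting the defining property of $T_k$. You have simply written out in full the standard way to extract that even cycle (iterate a choice of crossing out-neighbour until a vertex repeats), so the two arguments are the same in substance.
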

% In particular $T_k$ has no $(\Delta^+\leq k-1, \Delta^+\leq k-1)$-partition.

This is due to the simple fact that if a digraph $D$ has a
$2$-partition $(V_1,V_2)$ such that the bipartite digraph induced by
the arcs between the two sets has minimum out-degree at least $1$,
then this and hence also $D$ has an even directed cycle.

Alon~\cite{alonCPC15} also remarked that it is always possible to
split $V(D)$ into three sets such that each of the induced subdigraphs
has smaller maximum out-degree than $D$ (see Theorem
\ref{alonsplit}). In Proposition~\ref{prop:Alon-gen}, we generalize
this to all values of $k$.  We show that for every positive integer
$k$, there is a $(2k+1)$-partition of $V(D)$ such that the out-degree
of every vertex $x$ in its part is at most $d^+_D(x)-k$ or $0$ if
$d^+_D(x) <k$.

\medskip

The digraphs in ${\cal T}$ show that one cannot always obtain a
$2$-partition of a digraph such that in each subdigraph induced by the
parts, the out-degree of every vertex or the maximum out-degree is
smaller than in the original graph. So it is natural to ask whether
the existence of such a partition can be decided in polynomial time.

A {\bf $k$-all-out-degree-reducing $p$-partition} of a digraph $D$ is
a $p$-partition $(V_1,\dots , V_p)$ of $V$ such that
$d^+_{\induce{D}{V_i}}(v) \leq \max\{0, d^+_D(v)-k\}$ for all $1\leq
i\leq p$ and all $v\in V_i$.  A {\bf $k$-max-out-degree-reducing
  $p$-partition} of a digraph $D$ is a $p$-partition $(V_1,\dots ,
V_p)$ of $V$ such that $\Delta^+(\induce{D}{V_i})\leq
  \max\{0,\Delta^+(D)-k\}$ for $i\in [p]$.  Observe that a
  $k$-all-out-degree-reducing $p$-partition is also a
  $k$-max-out-degree-reducing $p$-partition.  However the converse is
  not necessarily true.  So for fixed integers $k$ and $p$, we are
interested in the problems of deciding whether a given digraph admits
one of the above defined partitions.

\begin{problem}[\sc $k$-all-out-degree-reducing $p$-partition]
\label{dreducebyk}~\\
\underline{Input:} a digraph $D$;\\ \underline{Question:} Does $D$
have a $p$-partition $(V_1,V_2)$ with $d^+_{\induce{D}{V_i}}(v) \leq
\max\{0, d^+_D(v)-k\}$ for $i\in [p]$?
\end{problem}

\begin{problem}[\sc $k$-max-out-degree-reducing $p$-partition]
\label{Deltareducebyk}~\\
\underline{Input:} a digraph $D$;\\ \underline{Question:} Does $D$
have a $p$-partition $(V_1,V_2)$ with
  $\Delta^+(\induce{D}{V_i})\leq \max\{0,\Delta^+(D)-k\}$ for $i\in [p]$?
\end{problem}

\medskip

We first consider the case of $2$-partitions. The complexity of {\sc $1$-max-out-degree-reducing $2$-partition} was posed in the paper~\cite{bangTCS636} in which  the complexity of a large number of other $2$-partition problems is established.
We also consider a closely related kind of $2$-partitions: A {\bf
  $(\Delta^+\leq k_1, \Delta^+\leq k_2)$-partition} of a digraph is a
$2$-partition $(V_1, V_2)$ such that $\Delta^+(\induce{D}{V_i})\leq
k_i$ for $i\in \{1,2\}$.  Note that if a digraph is $r$-out-regular,
then a $(\Delta^+\leq r-k, \Delta^+\leq r-k)$-partition is also a
$k$-max-out-degree-reducing $2$-partition and a
$k$-all-out-degree-reducing $2$-partition.  We thus consider the
following problem.
\begin{problem}[\sc $(\Delta^+\leq k_1, \Delta^+\leq k_2)$-Partition]
\label{Deltak1k2}~\\
\underline{Input:} a  digraph $D$;\\
\underline{Question:} Does $D$  have a $2$-partition $(V_1,V_2)$
with $\Delta^+(\induce{D}{V_i})\leq k_i$ for $i\in \{1,2\}$?
\end{problem}

When $k_1=k_2=0$ the problem is the same as just asking whether $D$ is
bipartite which is clearly polynomial-time solvable.  If $D$ is a
symmetric digraph, then there is a one-to-one correspondence between
the set of $(\Delta^+\leq k, \Delta^+\leq k)$-partitions of $D$ and
the so-called $k$-improper $2$-colourings of $UG(D)$, the underlying
(undirected) graph of $D$. A 2-colouring is {\bf $k$-improper} if no
vertex has more than $k$ neighbours with the same colour as itself.
Cowen et al.~\cite{cowenLNCS713} proved that for any $k\geq 1$,
deciding whether a graph has a $k$-improper $2$-colouring is ${\cal
  NP}$-complete. Consequently, {\sc $(\Delta^+\leq k, \Delta^+\leq
  k)$-Partition} is ${\cal NP}$-complete for all $k\geq 1$.

On the other hand, Proposition \ref{GreduceDelta} (ii) can be
translated as follows to symmetric digraphs.
\begin{proposition}
Every symmetric digraph with maximum out-degree $K$ has a
$(\Delta^+\leq \lfloor K/2 \rfloor, \Delta^+\leq \lfloor K/2
\rfloor)$-partition.
\end{proposition}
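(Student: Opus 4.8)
The plan is to reduce directly to Proposition~\ref{GreduceDelta}(ii) via the underlying undirected graph. Let $D$ be a symmetric digraph with $\Delta^+(D)=K$, and set $G=UG(D)$. Because $D$ is symmetric, every vertex $v$ satisfies $d^+_D(v)=d_G(v)$, so $\Delta(G)=K$. The crucial structural remark is that this correspondence is inherited by induced subgraphs: for any $S\subseteq V(D)$ the subdigraph $\induce{D}{S}$ is again symmetric with underlying graph $\induce{G}{S}$, hence $d^+_{\induce{D}{S}}(v)=d_{\induce{G}{S}}(v)$ for all $v\in S$, and in particular $\Delta^+(\induce{D}{S})=\Delta(\induce{G}{S})$.

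First I would apply Proposition~\ref{GreduceDelta}(ii) to the graph $G$, obtaining a $2$-partition $(V_1,V_2)$ of $V(G)=V(D)$ with $\Delta(\induce{G}{V_i})\leq \Delta(G)/2 = K/2$ for $i=1,2$. Since vertex degrees are integers, this bound may be sharpened to $\Delta(\induce{G}{V_i})\leq \lfloor K/2\rfloor$. Then I would translate back using the remark above: $\Delta^+(\induce{D}{V_i}) = \Delta(\induce{G}{V_i})\leq \lfloor K/2\rfloor$ for $i=1,2$, so $(V_1,V_2)$ is exactly a $(\Delta^+\leq \lfloor K/2\rfloor,\ \Delta^+\leq \lfloor K/2\rfloor)$-partition of $D$, as required.

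There is no genuine obstacle here; the only points needing care are the verification that "out-degree in a symmetric digraph equals degree in its underlying graph" survives passing to induced subdigraphs, and the integrality step that replaces $K/2$ by $\lfloor K/2\rfloor$. Both are routine bookkeeping, so the proof is essentially a one-line invocation of Proposition~\ref{GreduceDelta}(ii).
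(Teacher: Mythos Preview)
Your proposal is correct and matches the paper's approach exactly: the paper does not give a separate proof but simply states that this proposition is the translation of Proposition~\ref{GreduceDelta}(ii) to symmetric digraphs, which is precisely the reduction you carry out.
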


As we saw in Proposition \ref{monochromatic}, this result does not
extend to general digraphs.  Hence it is natural to ask about the
complexity of {\sc $(\Delta^+\leq k_1, \Delta^+\leq k_2)$-Partition}
when restricted to digraphs with small maximal out-degree.

\medskip

In the first part of the paper, we prove that {\sc $1$-all-out-degree-reducing $2$-partition} and {\sc $1$-max-out-degree-reducing $2$-partition} can be solved in polynomial
time when $k=1$. This answers the question posed in \cite{bangTCS636}
affirmatively. Then we derive a complete characterization of the
complexity of Problem \ref{Deltak1k2} in terms of the values of
$k_1,k_2$ and use it to prove that  {\sc $k$-all-out-degree-reducing $2$-partition}
and~{\sc $k$-max-out-degree-reducing $2$-partition} are ${\cal NP}$-complete for all values of
$k$ higher than 1. As a consequence of these results we solve an open
problem from \cite{kreutzerEJC24} on majority colourings.

Next, in Section\ref{sec:p-part}, we consider $p$-partitions for $p\geq 3$. We show that every digraph that a $k$-all-out-degree-reducing $2k+1$-partition. This implies that {\sc $k$-all-out-degree-reducing $p$-partition} and {\sc $k$-max-out-degree-reducing $p$-partition} are polynomial-time solvable for $p\geq 2k+1$ as the answer is always `Yes'.
We also characterize the digraphs having a $k$-all-out-degree-reducing $2k+1$-partition, which implies that {\sc $k$-all-out-degree-reducing $2k$-partition} and {\sc $k$-max-out-degree-reducing $2k$-partition} are polynomial-time solvable.
Finally, we show that,  for any $k>1$ and $3\leq p\leq 2k-1$, the problems {\sc $k$-all-out-degree-reducing $p$-partition} and {\sc $k$-max-out-degree-reducing $k$-partition} are ${\cal NP}$-complete.

We  conclude with some remarks and related open problems.

\section{Notation}
%%%%%%%%%%%%%%%%%%

Notation generally follows \cite{bang2009}.  We use the shorthand
notation $[k]$ for the set $\{1,2,\ldots{},k\}$.  Let $D=(V,A)$ be a
digraph with vertex set $V$ and arc set $A$.

Given an arc $uv\in A$, we say that $u$ \emphX{dominates} $v$ and $v$
is \emphX{dominated} by $u$. If $uv$ or $vu$ (or both) are arcs of
$D$, then $u$ and $v$ are {\bf adjacent}.  If neither $uv$ or $vu$
exist in $D$, then $u$ and $v$ are {\bf non-adjacent}.  The {\bf
  underlying graph} of a digraph $D$, denoted by $UG(D)$, is obtained
from $D$ by suppressing the orientation of each arc and deleting
multiple copies of the same edge (coming from directed 2-cycles).  A
digraph $D$ is {\bf connected} if $UG(D)$ is a connected graph, and
the {\bf connected components} of $D$ are those of $UG(D)$.

A {\bf $(u,v)$-path} is a directed path from $u$ to $v$.  A digraph is {\bf
  strongly connected} (or {\bf strong}) if it contains a $(u,v)$-path
for every ordered pair of distinct vertices $u,v$.  A digraph $D$ is
{\bf $k$-strong} if for every set $S$ of less than $k$ vertices the
digraph $D-S$ is strong.  A {\bf strong component} of a digraph $D$ is
a maximal subdigraph of $D$ which is strong. A strong component is
{\bf trivial}, if it has order $1$. An {\bf initial} (resp. {\bf
  terminal}) strong component of $D$ is a strong component $X$ with no
arcs entering (resp. leaving) $X$ in $D$.

The \emphX{subdigraph induced} by a set of vertices $X$ in a digraph
$D$, denoted by \induce{D}{X}, is the digraph with vertex set $X$ and
which contains those arcs from $D$ that have both end-vertices in
$X$. When $X$ is a subset of the vertices of $D$, we denote by $D-X$
the subdigraph $\induce{D}{V\setminus X}$. If $D'$ is a subdigraph of
$D$, for convenience we abbreviate $D-V(D')$ to $D-D'$.  \iffalse The
\emphX{contracted digraph} $D/X$ is obtained from $D-X$ by adding a
'new' vertex $x$ not in $D$ and by adding for every $u \in D-X$ the
arc $ux$ (resp. $xu$) if $u$ has an out-neighbour (resp. in-neighbour)
in $X$ (in $D$).  \fi

The \emphX{in-degree} (resp. \emphX{out-degree}) of $v$, denoted by
$d^-_D(v)$ (resp. $d^+_D(v)$), is the number of arcs from $V\setminus
\{v\}$ to $v$ (resp. $v$ to $V\setminus \{v\}$).  A digraph is {\bf
  $k$-out-regular} if all its vertices have out-degree $k$
and it is {\bf $k$-regular} if every vertex has both
in-degree and out-degree $k$.  A {\bf sink} is a vertex with
out-degree $0$ and a {\bf source} is a vertex with in-degree $0$.  The
{\bf degree} of $v$, denoted by $d_D(v)$, is given by
$d_D(v)=d^+_D(v)+d^-_D(v)$.  Finally the \emphX{maximum
    out-degree} and \emphX{maximum in-degree} of $D$ are respectively
  denoted by $\Delta^+(D)$ and $\Delta^-(D)$.

An {\bf out-tree} rooted at the vertex $s$, also called an {\bf
  $s$-out-tree},
%is a tree in $UG(D)$ containing the vertex $s$
is a connected digraph $T^+_s$ such that $d^-_{T^+_s}(s) =0$ and
$d^-_{T^+_s}(v)=1$ for every vertex $v$ different from $s$.
Equivalently, for every $v\in V(T^+_s)\setminus \{s\}$ there is a
unique $(s,v)$-path in $T^+_s$.  

\iffalse
The directional dual notion is the
one of an {\bf $s$-in-tree}, that is, a connected digraph $T^-_s$ such
that $d^+_{T^-_s}(s) =0$ and $d^+_{T^-_s}(v)=1$ for every vertex $v$
different from $s$.

An {\bf ${s}$-out-branching} (resp. {\bf $s$-in-branching}) is a
spanning $s$-out-tree (resp. $s$-in-tree).  We use the notation
$B^+_s$ (resp. $B^-_s$) to denote an $s$-out-branching (resp. an
$s$-in-branching). \\

\noindent{}A digraph $D$ is {\bf functional} if $d^+(x) =1$ for every
vertex $x\in V(D)$ (ie. it is 1-out-regular). It is {\bf
  subfunctional} if $d^+(x) \leq 1$ for every vertex $x\in V(D)$.  The
structure of functional and subfunctionals digraph is given in
Proposition \ref{prop:functional-struc}.  Let $D$ be a digraph. An
{\bf in-leaf} of $D$ is a vertex with out-degree $1$ and in-degree
$0$.  The {\bf stub} of $D$, denoted $S(D)$, is the unique maximum
induced subdigraph of $D$ with no in-leaf.

The easy proof of the following is left to the reader.
\begin{proposition}\label{prop:functional-struc}~\\
(i) A digraph is functional if and only if its stub is the disjoint union of directed cycles.\\
(ii) A digraph  is subfunctional if and only if  its stub is the disjoint union of directed cycles and isolated vertices.
\end{proposition}
\fi

In our ${\cal NP}$-completeness proofs we use reductions from the
well-known 3-SAT problem and from {\sc Monotone
  Not-all-equal-3-SAT}. The later is the variant where the boolean
formula ${\cal F}$ to be satisfied consists of clauses all of whose
literals are non-negated variables and we seek a truth assignment such
that each clause will get both a true and a false literal.  This
problem is also ${\cal NP}$-complete
\cite{shaeferSTOC10}.

\section{1-out-degree reducing partitions of digraphs}\label{outdegredsec}
%%%%%%%%%%%%%%%%%%%%%%%%%%

In this section we prove that {\sc $1$-all-out-degree-reducing $2$-partition}
  and~{\sc $1$-max-out-degree-reducing $2$-partition} are solvable in polynomial time for $k=1$.
 
Part (i) of the theorem below follows from a result of Seymour
\cite{seymourQJM25} (see also \cite{kreutzerEJC24}) but we include the
short proof for completeness (and we use the same idea to prove
(ii)). We shall use the following result, due to Robertson, Seymour,
and Thomas.

\begin{theorem}[Robertson, Seymour, and Thomas \cite{robertsonAM150}]\label{thm:even-cycle}
Deciding whether a given digraph has an even directed cycle is
polynomial-time solvable.
\end{theorem}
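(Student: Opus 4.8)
The plan is to follow the route of Robertson, Seymour and Thomas (with an independent proof due to McCuaig), which proves the theorem by reducing the existence of an even directed cycle to the recognition of bipartite graphs that admit a \emph{Pfaffian} orientation. First I would preprocess the input: $D$ has an even directed cycle iff some strong component does, and inside a strong component any arc lying on no directed cycle may be discarded without affecting the answer, so we may assume $D$ is strong with every arc on a directed cycle. Next I would pass to matrices: let $A$ be the $0/1$ adjacency matrix of $D$ and set $B=I+A$. In the expansion $\det B=\sum_\sigma \operatorname{sgn}(\sigma)\prod_i b_{i,\sigma(i)}$ a nonzero term corresponds to a permutation $\sigma$ whose non-fixed points decompose into directed cycles of $D$, and then $\operatorname{sgn}(\sigma)=(-1)^{(\text{number of those cycles of even length})}$ while $\prod_i b_{i,\sigma(i)}>0$; since the identity contributes a positive term, $B$ is sign-nonsingular (every real matrix with the same sign pattern is nonsingular) precisely when $D$ has no even directed cycle. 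By the classical chain of equivalences surveyed in \cite{robertsonAM150} (sign-nonsingular matrices, Pólya's permanent problem, Pfaffian orientations), deciding sign-nonsingularity reduces in polynomial time to the question: does a given bipartite graph admit a Pfaffian orientation?

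Second, I would reduce the Pfaffian question to \emph{braces}. Restricting a bipartite graph to the union of its perfect matchings and then repeatedly splitting along tight cuts of order two produces, in polynomial time, a family of braces, and the graph is Pfaffian iff every brace in this family is Pfaffian. So the core task becomes: given a brace $H$, decide whether $H$ is Pfaffian.

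The technical heart is the structure theorem for braces. It asserts — and its proof yields a polynomial procedure realising this — that for a brace $H$ one can produce exactly one of: (a) a plane drawing of $H$; (b) an isomorphism of $H$ with the Heawood graph; (c) a \emph{trisum} decomposition $H=H_1\oplus H_2$ along a prescribed cut into strictly smaller braces $H_1,H_2$; or (d) a conformal subgraph of $H$ (one whose deletion leaves a perfect matching) that is an even subdivision of $K_{3,3}$. In cases (a) and (b), $H$ is Pfaffian (planar graphs by Kasteleyn's theorem, the Heawood graph by inspection); in case (c), $H$ is Pfaffian iff both $H_1$ and $H_2$ are, so one recurses on the two summands; in case (d), the conformal $K_{3,3}$-subdivision is Little's obstruction and precludes a Pfaffian orientation, so $H$ is non-Pfaffian. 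The recursion in case (c) has polynomially many nodes and each of (a)--(d) is testable in polynomial time, so Pfaffian recognition for braces — hence for all bipartite graphs, hence (unwinding the first reduction) the even-directed-cycle problem — is polynomial-time solvable.

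The main obstacle is, unsurprisingly, the structure theorem itself: proving that every brace falls into one of the four cases, and in particular that the trisum cut in case (c) can always be exhibited efficiently when the brace is neither planar, nor the Heawood graph, nor carries Little's obstruction, is precisely the long and intricate combinatorial argument of \cite{robertsonAM150} (and of McCuaig's independent work). The supporting facts — that a trisum preserves and reflects the existence of a Pfaffian orientation, that planar braces and the Heawood graph are Pfaffian, and that a conformal even $K_{3,3}$-subdivision precludes a Pfaffian orientation — are comparatively routine. Since in the present paper Theorem~\ref{thm:even-cycle} is used only as a black box, this outline is all that is required here.
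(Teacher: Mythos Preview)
The paper does not prove Theorem~\ref{thm:even-cycle} at all: it is quoted as a result of Robertson, Seymour and Thomas and used purely as a black box inside the proof of Theorem~\ref{thm:struc-k1-p1}. Your outline is a faithful high-level sketch of the Robertson--Seymour--Thomas/McCuaig route (even cycle $\leftrightarrow$ sign-nonsingularity of $I+A$ $\leftrightarrow$ Pfaffian recognition of a bipartite graph $\leftrightarrow$ tight-cut decomposition into braces $\leftrightarrow$ the brace structure theorem), and you yourself correctly observe in your last sentence that none of it is actually required here. So there is nothing to compare: the ``paper's own proof'' is simply the citation \cite{robertsonAM150}.
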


\begin{theorem}\label{thm:struc-k1-p1}
Let $D$ be a digraph.
\begin{itemize}
\item[(i)] $D$ admits a $1$-all-out-degree-reducing-$2$-partition if
  and only if every non-trivial terminal strong component contains an
  even directed cycle.
\item[(ii)] $D$ admits a $1$-max-out-degree-reducing-$2$-partition if
  and only if every terminal strong component contains an even
  directed cycle or a vertex with out-degree less than $\Delta^+(D)$.
\end{itemize}
\noindent{}In both cases above, the desired $2$-partition can be
constructed in polynomial time when it exists.
\end{theorem}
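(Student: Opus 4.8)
The plan is to prove both directions by reducing to the problem of finding an even directed cycle, using the following key observation: if $B$ is a bipartite digraph with parts $(X,Y)$ in which every vertex has out-degree at least $1$ (counting only arcs going from one part to the other), then $B$ contains an even directed cycle, since any directed cycle in $B$ alternates between $X$ and $Y$ and hence has even length. Conversely, an even directed cycle $C$ can be $2$-coloured so that consecutive vertices receive different colours; this is the device that lets us "absorb" the out-degree-$1$ requirement at the vertices of $C$.

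\medskip

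\noindent\emph{Necessity.} For (i): suppose $(V_1,V_2)$ is a $1$-all-out-degree-reducing $2$-partition and let $X$ be a non-trivial terminal strong component. Every vertex $v\in X$ has $d^+_D(v)\ge 1$ (since $X$ is non-trivial and terminal, every vertex of $X$ has an out-neighbour, which must lie in $X$), so $d^+_{\induce{D}{V_j}}(v)\le d^+_D(v)-1$ forces $v$ to have, inside $X$, at least one out-neighbour in the other part $V_{3-j}$. Restricting the partition to $X$ thus gives a $2$-partition of $\induce{D}{X}$ in which every vertex has an out-neighbour in the opposite part; by the observation above, $\induce{D}{X}$ — hence its terminal strong component $X$ itself — contains an even directed cycle. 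For (ii): the same argument applied to a terminal strong component $X$ shows that every vertex $v\in X$ with $d^+_D(v)=\Delta^+(D)$ must have an out-neighbour inside $X$ in the opposite part; if all vertices of $X$ have out-degree exactly $\Delta^+(D)$ this again produces a bipartite sub-digraph on $X$ with minimum out-degree $1$ and hence an even directed cycle in $X$; otherwise $X$ contains a vertex of out-degree less than $\Delta^+(D)$, as required. (The case $\Delta^+(D)=0$ is trivial.)

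\medskip

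\noindent\emph{Sufficiency (construction).} This is the more delicate direction, and the main work is organizing a "backward sweep" through the strong components. Order the strong components $C_1,\dots,C_m$ in reverse topological order, so that all arcs between components go from higher index to lower; the terminal strong components are exactly those $C_i$ with no out-arcs to other components. I would build the colouring component by component, from $C_1$ upward. For a terminal strong component $C_i$: if it is trivial (case (i)) or if it contains a vertex of small out-degree (case (ii)), colour it arbitrarily — the out-degree-reduction constraint is vacuous at every vertex (all out-degrees are $0$, or the unique "tight" vertices can be handled by the even cycle); if it is non-trivial, take an even directed cycle $C\subseteq C_i$, $2$-colour $V(C)$ alternately, and then extend the colouring to the remaining vertices of $C_i$ by repeatedly picking a vertex with an already-coloured out-neighbour and colouring it oppositely to one such out-neighbour — this is possible because $C_i$ is strong, so from every vertex there is a path to $C$, which we can process in reverse order of distance to $C$. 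This guarantees every vertex of $C_i$ has an out-neighbour (inside $C_i$) of the opposite colour, killing one unit of its induced out-degree. For a non-terminal component $C_i$, every vertex already has an out-arc leaving $C_i$ to a lower component; we just colour $V(C_i)$ so that each vertex of $C_i$ gets the colour opposite to one such already-coloured out-neighbour in a lower component — again using strong connectivity of $C_i$ to order the vertices is unnecessary here since each vertex individually has such an external out-neighbour, so we may colour them one at a time with no interaction. In all cases, every vertex $v$ with $d^+_D(v)\ge 1$ (resp. $d^+_D(v)=\Delta^+(D)$ in case (ii)) ends up with an out-neighbour of the opposite colour, so $d^+_{\induce{D}{V_j}}(v)\le d^+_D(v)-1\le\max\{0,d^+_D(v)-1\}$ (resp. $\le \Delta^+(D)-1$), giving the desired partition.

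\medskip

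\noindent\emph{Polynomiality.} Strong components and a reverse topological order are computable in linear time; by Theorem~\ref{thm:even-cycle} an even directed cycle in each non-trivial terminal strong component can be found in polynomial time (find one if it exists, which the hypothesis guarantees); the alternating $2$-colouring of a cycle and the greedy extensions described above are linear. Hence the whole construction runs in polynomial time. The step I expect to be the main obstacle is making the "extend the alternating colouring off the even cycle through the rest of the strong component" argument airtight — one must be careful that the greedy extension order (by distance to the cycle, processed so that each newly coloured vertex sees an already-coloured out-neighbour) is well-defined and that it indeed covers all of $C_i$; strong connectivity is exactly what is needed, but the bookkeeping for which out-neighbour each vertex "uses" should be spelled out. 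A secondary subtlety is the exact handling in case (ii) of terminal components that are non-trivial but also contain a low-out-degree vertex, and of the corner case $\Delta^+(D)=0$; these are easy but should be stated explicitly.
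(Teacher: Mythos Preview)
Your overall strategy matches the paper's, and the necessity arguments are correct. However, there is a concrete error in your sufficiency construction for non-terminal strong components. You assert that ``every vertex already has an out-arc leaving $C_i$ to a lower component,'' and hence that the greedy extension via strong connectivity ``is unnecessary here since each vertex individually has such an external out-neighbour.'' This is false: a non-terminal strong component is one from which \emph{some} arc leaves, but individual vertices need not have an out-arc leaving. For instance, take $V=\{a,b,c\}$ with arcs $ab,\,ba,\,ac$: the component $\{a,b\}$ is non-terminal, yet $b$'s only out-neighbour is $a$. Under your procedure as written, $b$ is coloured independently of $a$ and may fail to acquire an out-neighbour of the opposite colour.

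The fix is immediate and you already have the tool: treat non-terminal components exactly as you treat terminal ones, seeding with any vertex that does have an already-coloured out-neighbour in a lower component, and then extending greedily through $C_i$ using strong connectivity. In fact the paper's construction is simpler still and avoids the component-by-component bookkeeping altogether: it seeds only the terminal components (alternately $2$-colouring a chosen even cycle in each non-trivial one, and assigning colour~$1$ to each sink, or in case~(ii) to a single chosen vertex of out-degree below $\Delta^+(D)$), and then repeatedly picks \emph{any} uncoloured vertex of $D$ with a coloured out-neighbour and colours it oppositely; this terminates because every vertex has a directed path into some terminal component. This also clarifies the case-(ii) subtlety you flagged: in a terminal component with a low-out-degree vertex but no even cycle, you must seed with that single vertex and extend greedily, not ``colour it arbitrarily'' --- an arbitrary colouring would fail at the vertices of out-degree $\Delta^+(D)$.
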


\iffalse Theorems~\ref{thm:even-cycle} and \ref{thm:struc-k1-p1}
directly imply Theorem~\ref{thm:k1-p1}.  \fi

%\begin{proof}[ of Theorem~\ref{thm:struc-k1-p1}]
\begin{proof}
Let $X_1, \dots , X_r$ be the terminal strong components of $D$
ordered in such a way that $X_1,\ldots{},X_q$ are non-trivial and
$X_{q+1},\ldots{},X_r$ are trivial. Set $S=\bigcup_{i=q+1}^r
V(X_i)$. Observe that $S$ is the set of sinks of $D$.

(i) Suppose first that $D$ admits a
$1$-all-out-degree-reducing-$2$-partition, then that partition
restricted to $X_i$, $1\leq i\leq q$, would induce a bipartite
spanning subdigraph of $X_i$ with an even directed cycle.

Assume now that $X_i$ contains an even directed cycle $C_i$ for all
$i\in [q]$.  First properly $2$-colour all the cycles $C_1,C_2, \ldots , C_q$
and colour the vertices of $S$ with colour $1$. If there exists an
uncoloured vertex, then there must also exist an uncoloured vertex
with an arc to a coloured one (as we have coloured at least one vertex
in every terminal strong component).  Give this vertex the opposite
colour of its coloured out-neighbour. Repeating this procedure until
all vertices have been coloured gives us a $2$-colouring where every
vertex not in $S$ has an out-neighbour of different colour to itself.
From this $2$-colouring, we obtain the desired partition.

(ii) The necessity is seen as above. Now assume that every terminal
component $X_i$, $i\in [r]$, contains either an even directed cycle or
a vertex of out-degree less than $\Delta^+(D)$. Pick an even directed
cycle $C_i$ for each terminal component with such a cycle and a vertex
$z_j$ with $d^+(z_j)<\Delta^+(D)$ for the other terminal components
(this includes the trivial ones). Let $Z$ be the union of the vertices
$z_j$.  Now $2$-colour all the even directed cycles and colour the
vertices of $Z$ with colour $1$.  As above we can extend this
colouring into a $2$-colouring of $D$ where every vertex not in $Z$
has an out-neighbour of different colour to itself.  This
$2$-colouring correspond to the desired partition.

\medskip

The complexity claim follows from Theorem~\ref{thm:even-cycle} and the
fact that our proof is constructive. \end{proof}

We will show in Theorem~\ref{thm:recap} that {\sc $k$-all-out-degree-reducing $2$-partition}
and~{\sc $k$-max-out-degree-reducing $2$-partition} are $\cal NP$-complete for $k>1$.

\section{2-partitions with restricted maximum out-degrees}
%%%%%%%%%%%%%%%%%%%%%%%%%%%%%

In this section we consider Problem \ref{Deltak1k2} and determine its
complexity for all possible values of the parameters $k_1,k_2$. By
symmetry, we may assume that we always have $k_1\leq k_2$. Recall that
when $k_1=k_2=0$ the problem is the same as just asking whether $D$ is
bipartite which is polynomial-time solvable, so we may assume below
that $k_2>0$.

\medskip

The following gadget, depicted in Figure~\ref{figureConnector}, turns
out to be very useful in our constructions.  An {\bf
  $(x,y)$-$(i,p)$-connector} is the digraph with vertex set
$\{x,y,s\}\cup T\cup U\cup U'$ with $|T|=i$ and $|U|=|U'|=p$ with all
arcs from $x$ to $T$, all arcs from $T$ to $U$, all arcs between $U$
and $U'$, except one arc $u'u$ for some $u\in U$ and $u'\in U'$, all
arcs from $s$ to $U'\setminus \{u'\}$ %, for some $u'\in U'$ and the
arcs $u's$ and $sy$. The next two lemmas illustrate the usefulness of
connectors.

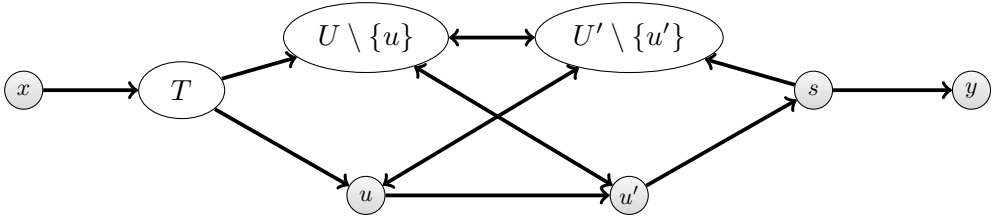
\begin{figure}[h!]
\begin{center}
\tikzstyle{vertexX}=[circle,draw, top color=gray!5, bottom color=gray!30, minimum size=16pt, scale=0.9, inner sep=0.6pt]
\tikzstyle{vertexZ}=[circle,draw, top color=white!5, bottom color=white!30, minimum size=4pt, scale=0.6, inner sep=0.5pt]
\tikzstyle{vertexBIG}=[ellipse, draw, scale=1.1, inner sep=3.5pt]
\begin{tikzpicture}[scale=0.7]

 %\draw (1,4) node {$H$};
 \node (x) at (1.0,4.0) [vertexX] {$x$};
 \node (T) at (4.0,4.0) [vertexBIG] {\mbox{ }$T$\mbox{ }};
 \node (Ub) at (7.5,5.0) [vertexBIG] {$U\setminus\{u\}$};
 \node (us) at (7.5,2.0) [vertexX] {$u$};
 \node (Um) at (12.5,5.0) [vertexBIG] {$U'\setminus\{u'\}$};
 \node (up) at (12.5,2.0) [vertexX] {$u'$};
 \node (s) at (16.0,4.0) [vertexX] {$s$};
 \node (y) at (19.0,4.0) [vertexX] {$y$};

    \draw [->, line width=0.05cm] (x) -- (T);
    \draw [->, line width=0.05cm] (T) -- (Ub);
    \draw [->, line width=0.05cm] (T) -- (us);

%    \draw [->, line width=0.05cm] (Ub) to [out=340, in=200]  (Um);
%    \draw [->, line width=0.05cm] (Um) to [out=160, in=20] (Ub);

    \draw [<->, line width=0.05cm] (Ub) -- (Um);
    \draw [<->, line width=0.05cm] (Ub) -- (up);
    \draw [<->, line width=0.05cm] (us) -- (Um);
    \draw [->, line width=0.05cm] (us) -- (up);

    \draw [->, line width=0.05cm] (s) -- (Um);
    \draw [->, line width=0.05cm] (up) -- (s);

    \draw [->, line width=0.05cm] (s) -- (y);
\end{tikzpicture}
\end{center}
\caption{An {\bf $(x,y)$-$(i,p)$-connector}, where $|T|=i$ and $|U \setminus \{u\}|=p-1$ and $|U \setminus \{u'\}| = p-1$. }
\label{figureConnector}
\end{figure}

\begin{lemma}\label{lem:connector}
Let $k_1, k_2, i$ be three positive integers, with $1\le k_1\le k_2$,
let $D$ be a digraph and let $x,y$ be two vertices in $D$.  Let $D'$
be the digraph obtained from $D$ by adding an
$(x,y)$-$(i,p)$-connector.  $D'$ has a $(\Delta^+\leq k_1,
\Delta^+\leq k_2)$-partition if and only if $D$ has one.
\end{lemma}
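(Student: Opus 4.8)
\textbf{Proof plan for Lemma~\ref{lem:connector}.}

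The plan is to analyze the structure of any $(\Delta^+\leq k_1,\Delta^+\leq k_2)$-partition restricted to the added connector, and to show that such a partition always forces $x$ and $y$ to lie in the same part, while imposing no further constraint on $D$ than a proper colouring of the connector's interior already satisfies. This will give both directions at once: a partition of $D'$ restricts to a partition of $D$ (deleting the connector vertices can only decrease out-degrees), and conversely any partition of $D$ can be extended over the connector. First I would set up notation: write $V_1,V_2$ for the two parts with the convention that part~$i$ must have maximum out-degree at most $k_i$, and examine the vertices $s,u,u'$ together with the sets $U,U'$. The key local fact to establish is that in the subdigraph on $U\cup U'$ (a near-complete bipartite digraph with all $2$-cycles except $u'u$), any valid $2$-partition must be ``almost monochromatic'': if two vertices on opposite sides of this biclique were split into different parts, each would see the entire other side as out-neighbours in its own part, giving out-degree $\geq p-1 \geq k_2$ only if $p$ is large enough — so here I would need the hypothesis that $p$ is chosen large (the statement as quoted omits the precise bound on $p$; I would take $p \ge k_2+2$ or whatever threshold the construction fixes, so that $p-1 > k_2$). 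From this, $U\setminus\{u\}$, $U'\setminus\{u'\}$, $u$, and $u'$ are heavily constrained, and tracing the arcs $s \to U'\setminus\{u'\}$, $u's$, $sy$, $x\to T$, $T\to U$ pins down the colours of $s$, $x$, $y$ relative to one another.

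The core of the argument is the following chain of forced implications, which I would carry out carefully. Because of the biclique between $U$ and $U'$, essentially all of $U\cup U'$ must receive a single colour, say colour~$c$; the one missing arc $u'u$ is exactly what allows $u$ (or $u'$) to possibly deviate, and I would check which deviations are consistent with the degree bounds, using that $|T|=i$ feeds out-degree into the $U$-vertices and that $s$ dominates $U'\setminus\{u'\}$ plus $y$. Then: since $s$ has out-neighbours in $U'\setminus\{u'\}$ (all of colour $c$) plus $y$, and $s$ can tolerate only a bounded number of same-coloured out-neighbours, one deduces $s$ has colour $\neq c$ (as $|U'\setminus\{u'\}| = p-1$ exceeds both thresholds); hence from the arc $sy$, vertex $y$ has colour $c$. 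Similarly the arc $u's$ together with $u'$'s colour forces consistency, and tracking $x\to T\to U$ with $|T|=i$ and the fact that $T$'s vertices dominate all of $U$ forces $T$ to avoid colour $c$ (when $i>$ threshold) — wait, here I must be careful: $|T|=i$ is only assumed positive, not large. So instead I expect the role of $T$ is the opposite: $T$ is small enough that its out-degree bound is the binding constraint, forcing each vertex of $T$ to have colour different from $c$ (since it dominates all $p$ vertices of $U$, all coloured $c$), whence $x$, dominating all $i$ vertices of $T$, must have colour $c$ — the same colour as $y$. Thus $x$ and $y$ are forced into the same part.

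Having shown ``any valid partition of $D'$ puts $x,y$ together and restricts to a valid partition of $D$,'' the forward direction is immediate. For the converse, given a $(\Delta^+\leq k_1,\Delta^+\leq k_2)$-partition $(V_1,V_2)$ of $D$, I would extend it by the recipe dictated above: colour all of $U\cup U'$ with the colour opposite to that of $x$ (equivalently, the colour of $x$ assigned to $T$ is forced to the other), colour $s$ with $x$'s colour, and verify all out-degree constraints hold — $x$ gains $i$ out-neighbours in $T$ of the opposite colour (no cost), each vertex of $T$ has all $p$ out-neighbours in $U$ of the opposite colour, each vertex of $U\cup U'$ sees its biclique partner of its own colour but the count is $p-1$ minus the one missing arc, which I need to be $\leq k_1$ — \emph{this is backwards}, so in fact the extension must colour $U$ and $U'$ with \emph{different} colours from each other, except that the single missing arc $u'u$ is what makes this consistent; $s$ then has out-neighbours $U'\setminus\{u'\}$ and $y$, and I pick $s$'s colour to be the one not used by $U'\setminus\{u'\}$. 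The bookkeeping here is routine once the colouring pattern is fixed, so I will not grind through it.

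\textbf{Main obstacle.} The delicate point is getting the colouring pattern on the biclique $U\cup U'$ exactly right: the single deleted arc $u'u$ is the hinge of the whole gadget, and I expect the real work is a short case analysis showing (a) that no valid partition can two-colour $U\cup U'$ non-trivially — unless it does so in the one specific way permitted by the missing arc — and (b) that this forces the colours of $s,x,y$ as claimed, independently of anything happening inside $D$. I would also need to make explicit the size hypothesis on $p$ (and confirm $i$ may be any positive integer), since the argument that ``$p-1$ out-neighbours of one colour violates the bound'' requires $p-1 > k_2$; I'd state this threshold at the start of the proof.
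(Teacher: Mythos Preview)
Your central claim --- that any valid $2$-partition of $D'$ must place $x$ and $y$ in the same part --- is both false and unnecessary. It is false because the paper's own extension (put $T\cup U'$ in the part opposite $x$, and $U\cup\{s\}$ in $x$'s part) works regardless of where $y$ sits: the only out-neighbour of $s$ that could land in $s$'s part is $y$, giving $s$ out-degree at most $1\le k_1$ there. It is unnecessary because the lemma only asserts an equivalence, not a forcing property. The forward direction is exactly the one-line restriction argument you state; for the converse you need merely exhibit \emph{one} extension, not classify all of them.

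Consequently the whole forcing analysis on the biclique $U\cup U'$, and the worry about a size threshold $p>k_2+1$, are red herrings. No lower bound on $p$ is needed at all: in the extension above, every vertex of $T$, $U$, and $U'$ has out-degree $0$ within its own part, and $s$ has out-degree at most $1$. You actually brush against this extension yourself near the end (``colour $U$ and $U'$ with different colours from each other \dots pick $s$'s colour to be the one not used by $U'\setminus\{u'\}$''), but by then you have tangled it up with the incorrect forcing narrative and never write it down cleanly. The paper's proof is three lines: state the restriction for one direction, write down $(V_1\cup U\cup\{s\},\,V_2\cup T\cup U')$ for the other (assuming $x\in V_1$), and observe $d^+_{V_1}(s)\le 1\le k_1$.
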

\begin{proof}
Clearly, if $D'$ has a $(\Delta^+\leq k_1, \Delta^+\leq
k_2)$-partition, then its restriction to $V(D)$ is also a
$(\Delta^+\leq k_1, \Delta^+\leq k_2)$-partition.

Assume now that $D$ has a $(\Delta^+\leq k_1, \Delta^+\leq
k_2)$-partition $(V_1, V_2)$.  By symmetry, we may assume that $x\in
V_1$.  Now one easily checks that $(V_1\cup U \cup \{s\}, V_2\cup
T\cup U')$ is a $(\Delta^+\leq k_1, \Delta^+\leq k_2)$-partition of
$D'$. Indeed, even if $y\in V_1$ we have $d_{V_1}^+(s)\leq 1\le k_1$.
\end{proof}

\begin{lemma}\label{lem:regstrong}
Let $k_1,k_2,p$ be non-negative integers with $0\le k_1\le
  k_2$. If {\sc $(\Delta^+\leq k_1, \Delta^+\leq k_2)$-Partition} is
$\cal NP$-complete for digraphs with maximum out-degree $p$, then {\sc
  $(\Delta^+\leq k_1, \Delta^+\leq k_2)$-Partition} is also $\cal
NP$-complete for digraphs for strong $(p+1)$-out-regular digraphs.

\end{lemma}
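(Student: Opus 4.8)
The plan is to reduce {\sc $(\Delta^+\leq k_1, \Delta^+\leq k_2)$-Partition} on digraphs of maximum out-degree $p$ to the same problem on strong $(p+1)$-out-regular digraphs; membership in $\cal NP$ is clear, so this suffices. The idea is to use the connector gadget of Lemma~\ref{lem:connector} to do two jobs at once. First, if in an $(x,y)$-$(i,q)$-connector we take $q=p+1$, then every internal vertex of the connector (the vertices of $T$, $U$, $U'$ together with $s$) has out-degree exactly $p+1$; so attaching to a vertex $v$ of $D$ a connector whose ``$T$''-part has size $p+1-d^+_D(v)$ raises $d^+(v)$ to exactly $p+1$ without spoiling out-regularity of the new vertices. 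Second, a connector attached at $(x,y)$ contains a directed $(x,y)$-path, so the $y$-endpoints can be chosen so as to weld the whole digraph into one strong component. The key point is that, by Lemma~\ref{lem:connector}, adding connectors never changes whether a $(\Delta^+\leq k_1, \Delta^+\leq k_2)$-partition exists, so the entire proof reduces to bookkeeping of out-degrees and of reachability.

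Assume first $k_1\geq 1$. Write $V(D)=\{v_1,\dots,v_n\}$, let $\sigma$ be the cyclic shift $\sigma(v_j)=v_{j+1}$ (indices modulo $n$), and for each $j$ set $\ell_j:=p+1-d^+_D(v_j)\geq 1$. Attach a $(v_j,\sigma(v_j))$-$(\ell_j,p+1)$-connector for every $j$, all connectors being pairwise disjoint apart from the vertices of $D$ to which they are attached; call the result $D'$. A vertex $v_j$ gains exactly $\ell_j$ new out-arcs (those into its $T$-part) and no other new out-arc, since the only arc a connector sends towards its $y$-endpoint is the arc $sy$, which enters $v_j$; hence $d^+_{D'}(v_j)=d^+_D(v_j)+\ell_j=p+1$, and inspecting Figure~\ref{figureConnector} with $|U|=|U'|=p+1$ confirms that each internal vertex of a connector has out-degree exactly $p+1$ as well. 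So $D'$ is $(p+1)$-out-regular. For strong connectivity, the connector attached at $(v_j,v_{j+1})$ contains a directed path $v_j\to T\to U\to U'\to s\to v_{j+1}$, every internal vertex of that connector is reached from $v_j$ and has a directed path to $s$ (hence to $v_{j+1}$), and since $\sigma$ is a single $n$-cycle these paths make every vertex of $D'$ reach and be reached from every other. Finally $D'$ is obtained from $D$ by adding the connectors one at a time, so $n$ successive applications of Lemma~\ref{lem:connector} give that $D'$ has a $(\Delta^+\leq k_1, \Delta^+\leq k_2)$-partition if and only if $D$ does. As $|V(D')|=|V(D)|+O(np)$ with $p$ fixed, the reduction is polynomial.

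It remains to treat $k_1=0$, where the hypothesis forces $k_2\geq 1$ (and in fact $k_2<p$, for otherwise the problem is trivial). Here Lemma~\ref{lem:connector} cannot be invoked directly, since its proof uses $k_1\geq 1$, and indeed a connector attached to a vertex of small out-degree would pin that vertex to a prescribed side; instead I would pad out-degrees using a fixed strong $(p+1)$-out-regular ``absorber'' gadget whose $(\Delta^+\leq 0, \Delta^+\leq k_2)$-partitions are flexible enough to keep the attachment vertex on the side opposite to the padded vertex, and then close the construction up for strong connectivity as before. The one genuinely delicate issue — and exactly the one the connector neutralises in the main case — is this: the extra out-arcs needed to reach out-degree $p+1$ must never, in any $2$-partition, create an out-arc inside a part exceeding $k_1$ nor overload a part beyond $k_2$, while the digraph must still become strong; once this is arranged the rest is routine.
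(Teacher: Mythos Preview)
Your treatment of the case $k_1\geq 1$ is correct and essentially identical to the paper's: both attach a $(v_j,v_{j+1})$-$(p+1-d^+_D(v_j),\,p+1)$-connector for each $j$ around a single Hamiltonian cycle on $V(D)$, check out-regularity and strong connectivity, and invoke Lemma~\ref{lem:connector} repeatedly.

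The case $k_1=0$, however, is not proved. You correctly observe that Lemma~\ref{lem:connector} fails (its proof puts $s$ into $V_1$ and uses $k_1\geq 1$ to allow the arc $sy$ inside $V_1$), and you correctly diagnose what a replacement gadget would have to do; but you never build one. Saying you ``would pad out-degrees using a fixed strong $(p+1)$-out-regular absorber gadget whose partitions are flexible enough'' is a specification, not a construction, and the final sentence openly leaves the delicate point unresolved. As it stands this paragraph is a sketch of requirements, not an argument.

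The paper's fix is different from what you outline and worth knowing. Rather than search for a new absorber, it salvages the connector itself: if $xy$ is already an arc of the ambient digraph, then in any $(\Delta^+\leq 0,\Delta^+\leq k_2)$-partition one cannot have both $x,y\in V_1$, and this alone is enough to make the proof of Lemma~\ref{lem:connector} go through (when $x\in V_1$ the vertex $s$ lands in $V_1$ with its sole out-arc $sy$ going to $V_2$). To guarantee that every vertex has an incoming arc along which a connector can be hung, the paper first grows $D$ by a binary out-tree $T$ rooted at a fresh vertex $s$ whose leaves are identified with $V(D)$; adding such a tree vertex-by-vertex preserves the existence of a $(\Delta^+\leq 0,\Delta^+\leq k_2)$-partition. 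One then attaches, for each tree arc $uv$, a $(v,u)$-connector of the right $T$-size, plus one more at $s$; this simultaneously achieves $(p+1)$-out-regularity and strong connectivity, and the modified Lemma~\ref{lem:connector} applies at every step because each connector is attached along an arc of $T$. You should replace your last paragraph with this construction (or produce a genuine absorber gadget with proof).
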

\begin{proof}
First assume that $1\le k_1\le k_2$. Then we can use
  Lemma~\ref{lem:connector} quite directly. Consider a digraph $D$
with maximum out-degree $p$, and let $\{v_1, \dots , v_n\}$ be its
vertex set.  For $j\in [n]$, let $i_j = p+1 - d^+_D(v_j)$. Observe
that for every $j$ we have $i_j\geq 1$, because $\Delta^+(D)\leq p$.
Let $D'$ be the digraph obtained by adding a $(v_j, v_{j+1})$-$(i_j,p
+1)$-connector for every $j\in [n]$ (with $v_{n+1}=v_1$).  It is
simple matter to check that $D'$ is $(p+1)$-out-regular and strong
because every $i_j$ is at least 1.  Moreover,
Lemma~\ref{lem:connector} implies that $D'$ has a $(\Delta^+\leq k_1,
\Delta^+\leq k_2)$-partition if and only if $D$ has one.

Now assume that we have $0=k_1<k_2$. In this case we will need to
  put connectors between adjacent vertices to insure that
  Lemma~\ref{lem:connector} holds. Indeed if a digraph $D$ has a
  $(\Delta^+=0,\Delta^+\le k_2)$-partition and $xy$ is an arc of $D$,
  then the digraph obtained from $D$ by adding a
  $(x,y)$-$(p+1-d^+_D(x),p)$-connector to $D$ admits also a
  $(\Delta^+=0,\Delta^+\le k_2)$-partition. The proof of this
  statement is similar to the one of Lemma~\ref{lem:connector} using
  the fact that as $xy$ is an arc of $D$ then we cannot have $x\in
  V_1$ and $y\in V_1$ in any $(\Delta^+=0,\Delta^+\le k_2)$-partition
  $(V_1,V_2)$ of $D$.\\ Now let $D$ be a digraph with maximum
  out-degree $p$. It is easy to check that the digraph obtained by
  adding a new vertex to $D$ with two out-neighbours in $D$ has a
  $(\Delta^+=0,\Delta^+\le k_2)$-partition if and only if $D$ has
  one. So let $s$ a new vertex and let $T$ be a binary $s$-out-tree
  with $|V(D)|$ leaves (i.e. every vertex of $T$ has out-degree 2
  except the leaves which have out-degree 0). We construct $D'$ by
  adding a copy of $T$ to $D$ and identifying the vertices of $D$ with
  the leaves of $T$. By repeating the previous remark, we obtain that
  $D'$ admits a $(\Delta^+=0,\Delta^+\le k_2)$-partition if and only
  if $D$ has one. To conclude we build $D''$ by adding a
  $(v,u)$-$(p+1-d^+_{D'}(v),p)$-connector to $D'$ for every arc $uv$
  of the copy of $T$ and a $(s,w)$-$(p-1,p)$-connector for an
  out-neighbour $w$ of $s$. Using the modified version of
  Lemma~\ref{lem:connector} for $(\Delta^+=0,\Delta^+\le
  k_2)$-partitions, we conclude that $D$ has such a partition if, and
  only if, $D$ has one. Moreover, by construction, it is clear that
  $D''$ is strong and $(p+1)$-out-regular.
\end{proof}

Obviously every digraph of maximum out-degree $k\leq \max\{k_1,k_2\}$
has a $(\Delta^+\leq k_1,\Delta^+\leq k_2)$-partition. %for every choice
%of $k_1\leq k\leq k_2$.
As we now show, just increasing the maximum
out-degree one above this value results in a shift in complexity from
trivial to ${\cal NP}$-complete, even if we also require that the
digraph is strongly connected and out-regular.

\begin{theorem}
\label{k_1<k_2}
For every choice of non-negative integers $k_1,k_2$ with $\max\{1,k_1\}<k_2$, the {\sc $(\Delta^+\leq k_1,\Delta^+\leq k_2)$-Partition}
problem is ${\cal NP}$-complete for strong $(k_2+1)$-out-regular
digraphs.
\end{theorem}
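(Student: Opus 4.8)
We first note that the problem lies in $\mathcal{NP}$: a $2$-partition $(V_1,V_2)$ is a polynomial-size certificate, and checking $\Delta^+(\induce{D}{V_i})\le k_i$ for $i\in\{1,2\}$ takes polynomial time. So it remains to establish $\mathcal{NP}$-hardness, and the plan is a polynomial reduction from an $\mathcal{NP}$-complete variant of satisfiability; I would try \textsc{3-SAT} first and switch to \textsc{Monotone Not-all-equal-3-SAT} if the clause gadget comes out cleaner there. The reduction is gadget based and, given an instance $\phi$, produces a strong $(k_2+1)$-out-regular digraph $D_\phi$ that admits a $(\Delta^+\le k_1,\Delta^+\le k_2)$-partition if and only if $\phi$ is satisfiable.

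The core of the construction is two families of gadgets. For each variable $x$ I would build a \emph{variable gadget}, a small digraph, parameterised by $k_1$ and $k_2$, that is rigid enough to admit exactly two kinds of valid $(\Delta^+\le k_1,\Delta^+\le k_2)$-partitions, interpreted as the two truth values of $x$, and carrying two \emph{port} vertices $t_x,f_x$ that in every valid partition land on opposite sides and so record the value. For each clause $C$ I would build a \emph{clause gadget}, a small digraph joined by a few arcs to the appropriate ports of the variable gadgets of the literals of $C$, designed so that the resulting digraph has a valid partition precisely when $C$ is satisfied (respectively, when the literals of $C$ are not all equal). The key leverage is the asymmetry $k_1<k_2$: a vertex placed in $V_1$ keeps at most $k_1$ out-neighbours on its own side, whereas a vertex of out-degree at most $k_2$ placed in $V_2$ is unconstrained on its own side; by padding a vertex with the right number of sinks and of mutually dominating blocks \emph{pinned} to one side, one can tune a threshold so that the clause gadget has a valid placement exactly under the desired pattern of the sides of its ports. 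Making the whole digraph $(k_2+1)$-out-regular and strongly connected can be arranged with the connectors of Lemma~\ref{lem:connector} (or their $k_1=0$ analogue used in the proof of Lemma~\ref{lem:regstrong}), which by Lemma~\ref{lem:connector} never change whether a valid partition exists.

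I expect the main obstacle to be the design of these gadgets so that they work for \emph{every} admissible pair $(k_1,k_2)$ rather than for small cases, and in particular the management of the pinned auxiliary vertices: a vertex can be forced into $V_2$ only because enough of its out-neighbours are themselves forced into $V_1$, and symmetrically, so the pinning hierarchy must be anchored to a ground configuration that cannot be flipped, and the whole gadget must stay robust after the padding and the insertion of connectors. It is natural to split the argument according to whether $k_1=0$ or $k_1\ge 1$, as is already done in Lemmas~\ref{lem:connector} and~\ref{lem:regstrong}, since when $k_1=0$ a vertex of $V_1$ may not keep any out-neighbour on its own side, which both stiffens the gadgets and restricts how they may be attached.

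Once the gadgets are in place, both directions are routine. From a satisfying assignment of $\phi$ one places each variable gadget in the configuration matching its truth value, each clause gadget in a configuration witnessing that its clause is satisfied, and extends the partition across the connectors as in the proof of Lemma~\ref{lem:connector}. Conversely, from a valid $(\Delta^+\le k_1,\Delta^+\le k_2)$-partition of $D_\phi$ one reads the truth value of each variable off its ports — rigidity of the variable gadgets makes this well defined — and the clause gadgets then certify that every clause is satisfied, so $\phi$ is satisfiable. Since $D_\phi$ is strong and $(k_2+1)$-out-regular by construction, this proves the claimed hardness.
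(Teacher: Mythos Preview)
Your outline matches the paper's architecture almost exactly: a reduction from \textsc{3-SAT}, variable gadgets with two opposite-side ports, a clause vertex whose placement forces at least one literal to the ``true'' side, and a clean-up phase using the connectors of Lemmas~\ref{lem:connector} and~\ref{lem:regstrong} to make the digraph strong and $(k_2+1)$-out-regular. So the overall strategy is right.

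However, there is a genuine gap, and you name it yourself: ``the pinning hierarchy must be anchored to a ground configuration that cannot be flipped''. You never construct this anchor, and it is the crux of the argument. A priori, in any digraph every $2$-partition has the symmetric partition obtained by swapping sides, so some intrinsic asymmetry must be engineered to force a first vertex into $V_2$ (or $V_1$). The paper's anchor is the Thomassen digraph $T_{k_2-1}$ (which exists precisely because $k_2>1$): by Proposition~\ref{monochromatic}, in \emph{every} $2$-partition of $T_{k_2-1}$ some vertex keeps all its $k_2-1$ out-neighbours on its own side, so adding a new vertex $v$ dominated by all of $T_{k_2-1}$ forces $v$ into $V_2$ in any $(\Delta^+\le k_1,\Delta^+\le k_2)$-partition. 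From this one bootstraps a gadget forcing a vertex into $V_1$, then builds the variable and clause gadgets exactly as you describe. Without this (or an equivalent) anchor, the ``mutually dominating blocks pinned to one side'' you invoke have no base case, and the whole construction is underspecified. This is also where the hypothesis $\max\{1,k_1\}<k_2$ is actually used, and why the case $k_2=1$ (Theorem~\ref{0-1part}) requires a separate argument.
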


\begin{proof} Let us call a 2-colouring $c: V\dom \{1,2\}$ {\bf good} if the 2-partition
induced by $c$ is a $(\Delta^+\leq k_1,\Delta^+\leq k_2)$-partition.
We start by describing a reduction from 3-SAT to {\sc $(\Delta^+\leq
  k_1,\Delta^+\leq k_2)$-Partition} in graphs of maximum out-degree
$k_2+1$ and then show how to modify the proof to work for strong and
$(k_2+1)$-out-regular digraphs using Lemma~\ref{lem:connector}.

We first make some observations about gadgets that force certain
vertices to have colour 1 or 2 in any good 2-colouring. Let $X$ be the
digraph that we obtain from a copy of the Thomassen digraph $T_{k_2-1}$ (it exists because $k_2>1$)
by adding one new vertex $v$ and all possible arcs from $V(T_{k_2-1})$
to $v$. It follows from Proposition \ref{monochromatic} that in any
good 2-colouring $c$ of a digraph containing an induced copy of $X$
the vertex $v$ must have $c(v)=2$. Let $Z$ be the digraph obtained by
taking $k_2+1$ copies $X_i$, $i\in [k_2+1]$ of $X$, where $v_i$
denotes the copy of $v$ in $X_i$, $i\in [k_2+1]$ and a new vertex $w$
and adding the arcs of $\{v_1v_{1+i} \mid i\in [k_2]\}\cup
\{v_1w\}$. By the remark above, for every good $2$-colouring of a
digraph containing an induced copy of $Z$, we have $c(w)=1$.

When we say below that a certain vertex $u$ has colour 1 or colour 2
we mean that we use a private copy of either  $Z$
with $u=w$  or $X$ with $u=v$to enforce that in all good 2-colourings of $D$ the vertex
$u$ will have the desired colour. Now let $W$ be a digraph containing
$k_1+k_2+2$ vertices
$v,\bar{v},a_1,\ldots{},a_{k_1},b_1\ldots{},b_{k_2}$ and the
arcs of $\{v\bar{v},\bar{v}v\}\cup \{a_1v,a_1\bar{v},b_1v,b_1\bar{v}\}\cup\{a_1a_{j+1} \mid j\in
[k_1-1]\}\cup \{b_1b_{j+1} \mid j\in [k_2-1]\}$. By adding suitable copies
of $X,Z$ we can ensure that for every good colouring of the digraph we
construct below we have $c(a_h)=1$ for $h\in [k_1]$ and $c(b_h)=2$ for
$h\in [k_2]$. This implies that in every good colouring we have
$c(v)=r$ and $c(\bar{v})=3-r$ for some $r\in \{1,2\}$.
%Similarly, using $k_2-1$ disjoint copies of $Z$ we can  

Now we are ready to construct a digraph $D=D({\cal F})$ from a given
instance $\cal F$ of 3-SAT. Let ${\cal F}$ have variables
$x_1,x_2,\ldots{},x_n$ and clauses $C_1,C_2,\ldots{},C_m$: represent
each variable $x_i$ by a copy $W_i$ of $W$ where the vertices
$v_i,\bar{v}_i$ correspond to $v$ and $\bar{v}$ in $W$ and play the
role of $x_i,\bar{x}_i$, respectively. For each clause $C_j$, we add a
new vertex $c_j$ of colour 2, $k_2-2$ arcs from $c_j$ to private (to
$c_k$) vertices of colour 2 and three arcs from $c_j$ to the three
vertices that correspond to its literals. So, if
$C_j=(x_1\vee{}\bar{x}_8\vee x_9)$ then we add the arcs
$c_jv_1,c_j\bar{v}_8$ and $c_jv_9$. This completes the construction of
$D$. Clearly $D$ can be constructed in polynomial time given $\cal
F$. The fact that $c_j$ must have colour 2 and already has $k_2-2$
out-neighbours of colour 2 implies that at least one of the vertices
corresponding to the literals of $C_j$ must have colour 1 in any good
colouring. Now it is easy to see that if we associate colour 1 with
$true$, then $D$ has a good colouring if and only if $\cal F$ is
satisfiable. This proves that {\sc $(\Delta^+\leq k_1, \Delta^+\leq
  k_2)$-Partition} is ${\cal NP}$-complete for digraphs of maximum
out-degree $k_2+1$ as it is easy to check that $\Delta^+(D) \leq k_2+1$.

To obtain the result on strong $(k_2+1)$-out-regular digraphs, we
first show how to obtain a strong superdigraph $D'$ of $D$ with the
desired colouring property. First observe that in $D$ no arc enters a
copy of $X$ unless this is inside a copy of $Z$ and for every copy of
$Z$ one copy of $X$ has no arcs entering it.  By adding a new vertex
$s$, sufficiently (but still polynomial in the size of $\cal F$) many
new vertices and the arcs of an out-tree of maximum out-degree $k_2$
rooted at $s$, we can obtain that $s$ is the root of an out-tree
$T^+_s$ whose only intersection with $V(D)$ is in its leaves where
$T^+_s$ has exactly one leaf in each copy of $X$.

 Note that every vertex corresponding to a literal has out-degree 1
 and that every vertex which does not correspond to a literal has a
 directed path to at least one vertex that corresponds to a literal
 (here we use that $T_{k_2-1}$ is strongly connected).  Thus if we add
 the arcs of the directed cycle
 $C=sv_1v_2\ldots{}v_ns$, we obtain the desired strong digraph $D'$ with
 $\Delta^+(D')=k_2+1$.  Clearly $D$ is a subdigraph of $D'$ so every
 good 2-colouring of $D'$ induces a good 2 colouring of
 $D$. Conversely, if $c$ is a good 2-colouring of $V(D)$, then it is
 still a good 2-colouring of $D\cup A(C)$ because $k_2\geq 2$ and we can extend $c$ to the
 non-leaf vertices of $T^+_s$ (colouring them by $2$) because they have out-degree at most
 $k_2$.

It remains to prove that we can also achieve a $(k_2+1)$-out-regular
digraph $D''$ which is strong and has a good 2-colouring if and only
if $\cal F$ is satisfiable. To show this we just have to observe that,
by Lemma \ref{lem:connector}, for every vertex $w$ with out-degree
$k<k_2+1$ we can add a private $(w,w)-(k_2+1-k,k_2+1)$-connector.
\end{proof}

Note that we used the fact that $k_2>1$ at several places in the proof
above. One of these was the use of $T_{k_2-1}$. Hence there still
remains the complexity of {\sc $(\Delta^+\leq 0,\Delta^+\leq
  1)$-Partition}. This was solved by Fraenkel.

\begin{theorem}[Fraenkel~\cite{fraenkelDAM3}]
\label{2direg0-1}
{\sc $(\Delta^+\leq 0,\Delta^+\leq 1)$-Partition} is ${\cal
  NP}$-complete on the class of digraphs with in- and out-degree at
most $2$. %Furthermore {\sc kernel} is also ${\cal NP}$-complete on this
%class of digraphs.
\end{theorem}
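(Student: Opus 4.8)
The plan is to reduce from an $\mathcal{NP}$-complete problem via a gadget that forces a vertex to have colour $2$ (the part with maximum out-degree $0$, i.e.\ the independent set) and a gadget that encodes a Boolean variable. Since the target class has in- and out-degree at most $2$, the construction must be very economical: every gadget vertex can afford at most two out-arcs and two in-arcs. A natural source problem here is {\sc Monotone Not-all-equal-3-SAT} (declared $\mathcal{NP}$-complete earlier in the excerpt) or ordinary $3$-SAT; in the NAE variant a clause gadget does not need to ``break ties'' only one way, which fits better with the $\Delta^+\le 1$ side. I would set up the correspondence: colour $1$ means \emph{true}, colour $2$ means \emph{false}, so that $V_1$ (with $\Delta^+\le 0$) must be an independent set of $D$, and a vertex coloured $2$ may have at most one out-neighbour coloured $2$.

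First I would build a \textbf{forcing gadget}: a small subdigraph, attachable to any vertex $u$, whose only consistent good $2$-colouring assigns $u$ colour $2$ (analogous to the role played by $X$ and $Z$ in Theorem~\ref{k_1<k_2}, but now with degrees capped at $2$). With in/out-degree at most $2$, a short directed cycle together with a couple of pendant in-arcs should do: on a directed cycle, if a vertex is coloured $1$ its out-neighbour on the cycle must be coloured $2$ (independence of $V_1$), and adding a second out-arc from that out-neighbour into another forced-$2$ vertex would push its $\Delta^+$ above $1$ unless we are careful; iterating this ``colour-propagation along out-arcs'' around a cycle of the right parity forces the desired colour. Next I would build the \textbf{variable gadget}: a pair of vertices $v_i,\bar v_i$ joined by a directed $2$-cycle $v_i\bar v_i, \bar v_i v_i$, so that they must receive opposite colours in any good colouring (they cannot both be in $V_1$). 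Finally, the \textbf{clause gadget} for a clause on literals $\ell_1,\ell_2,\ell_3$ would be a single vertex $c_j$, forced to colour $2$, with its out-arcs going to (representatives of) the three literal-vertices; but $c_j$ can have out-degree at most $2$, so I would split it into two cascaded vertices $c_j^1,c_j^2$ (a short out-path) to distribute the three out-arcs, each forced to colour $2$, so that not all three literal-vertices may be coloured $2$ simultaneously --- and in the NAE version a symmetric in-gadget, or reusing the $\Delta^+\le 1$ slack, ensures not all three are coloured $1$ either. Degree bookkeeping: each literal-vertex is hit by its variable $2$-cycle (out-degree $1$, in-degree $1$) and by clause arcs; to keep out-degree $\le 2$ and in-degree $\le 2$ one routes multiple clause occurrences through a balanced binary ``fan-in'' tree of forced-$2$ vertices, exactly as fan-in trees are used in the proof of Lemma~\ref{lem:regstrong}.

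I would then prove the two directions. If $\mathcal{F}$ is satisfiable, colour $v_i$ by $1$ iff $x_i$ is true and $\bar v_i$ oppositely, colour every forcing/clause/fan-in vertex by $2$, and check that each such vertex has at most one out-neighbour coloured $2$ (using that at least one literal per clause is true, hence coloured $1$) --- so $\Delta^+(\induce{D}{V_2})\le 1$ --- while $V_1$ is independent by construction. Conversely, from a good $2$-colouring, the forcing gadgets pin down all auxiliary vertices to colour $2$, the $2$-cycles force $\{v_i,\bar v_i\}$ to be a proper literal assignment, and the clause gadget's out-degree constraint forbids all three literal-vertices of a clause from being colour $2$, yielding a satisfying assignment. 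Throughout I would verify $\Delta^+(D),\Delta^-(D)\le 2$ locally at every vertex type.

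The \textbf{main obstacle} is the degree ceiling: the ``force a colour'' and ``fan-in'' tricks used in Theorem~\ref{k_1<k_2} and Lemma~\ref{lem:regstrong} there rely on vertices of out-degree up to $k_2$ or $k_2+1$, which is unavailable here since every vertex has out-degree at most $2$ and in-degree at most $2$. So the real work is designing a forcing gadget and a three-way clause gadget entirely out of bounded-degree pieces, and then carefully accounting in-degrees at the literal-vertices where the variable $2$-cycle, the clause arcs, and the fan-in tree all meet; getting this simultaneous $(\Delta^+\le 2,\Delta^-\le 2)$ bound to hold while preserving the logical equivalence is the delicate step. I expect that once a correct forcing gadget of in/out-degree $\le 2$ is in hand, the rest is routine; Fraenkel's original argument presumably settles exactly these gadget details.
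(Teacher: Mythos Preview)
The paper does not prove this theorem itself --- it is quoted from Fraenkel --- but it recalls (a simplified form of) Fraenkel's argument inside the proof of Theorem~\ref{0-1part}, and that argument differs from yours in one essential respect: it passes through the \emph{kernel} reformulation. For a $2$-out-regular digraph, $(V_1,V_2)$ is a $(\Delta^+\le 0,\Delta^+\le 1)$-partition if and only if $V_1$ is a kernel (an independent set that every other vertex reaches by an out-arc). Fraenkel then reduces 3-SAT to \textsc{Kernel}: variables become $2$-cycles $v_i\bar v_i,\bar v_iv_i$, and each clause is encoded by a single fixed $9$-vertex gadget $W$ built on a directed $3$-cycle (which has no kernel on its own) fanning out to two ``output'' vertices $z_8,z_9$ that send arcs to the literal vertices; the structure of $W$ forces at most one of $z_8,z_9$ into any kernel, so the other must see a literal in the kernel. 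No stand-alone ``force colour $2$'' gadget is ever built.

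Your proposal, by contrast, leaves the forcing gadget unspecified and --- as you yourself say --- this is precisely the crux. Worse, without the kernel viewpoint two of your gadgets do not behave as claimed. The $2$-cycle on $v_i,\bar v_i$ does \emph{not} force opposite colours in the partition reading: both may sit in $V_2$, since each then has out-degree~$1$ inside $V_2$; it is only in the kernel reading that exactly one of them must lie in $V_1$. And your cascaded clause pair $c_j^1\to c_j^2$, with both forced into $V_2$, already spends one of $c_j^1$'s out-arcs inside $V_2$, so the remaining literal arc from $c_j^1$ is forced into $V_1$ irrespective of the clause --- you do not get the intended ``at least one of the three literals is true'' constraint. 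The missing ingredient is exactly the kernel reformulation, which makes the variable and clause gadgets work and removes the need for a separate forcing device.
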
 

In order to strengthen this and to unify our results we need the
following result which can be obtained by modifying the proof in
\cite{fraenkelDAM3}. We give a proof for completeness.

\begin{theorem}
\label{0-1part}
For all $p\geq 2$, {\sc $(\Delta^+\leq 0,\Delta^+\leq 1)$-Partition} is
${\cal NP}$-complete on the class of strong $p$-out-regular digraphs.
\end{theorem}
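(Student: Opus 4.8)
The plan is to start from Theorem~\ref{2direg0-1}, which gives ${\cal NP}$-completeness of {\sc $(\Delta^+\leq 0,\Delta^+\leq 1)$-Partition} on digraphs with in- and out-degree at most $2$, and to apply a ``padding then regularisation'' argument analogous to the one in Lemma~\ref{lem:regstrong}, but tailored to the case $k_1=0$, $k_2=1$. The one subtlety is that Lemma~\ref{lem:regstrong} is stated for $p$-out-regular targets where $p-1$ is the maximum out-degree of the instance; here I want to hit an \emph{arbitrary} $p\geq 2$, so I first need to raise the maximum out-degree of a given instance up to $p-1$ without changing the answer, and only then make it strong and $p$-out-regular.

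First I would take an instance $D$ from Theorem~\ref{2direg0-1}, so $\Delta^+(D)\le 2$ and $\Delta^-(D)\le 2$. To pad out-degrees up to $p-1$ I attach, to each vertex $v$ with $d^+_D(v)<p-1$, a private ``out-degree filler'': a small fixed gadget that (a) forces $v$ to acquire extra out-neighbours without ever forcing $v$ into a particular colour class by itself, and (b) is internally $2$-colourable with all its own out-degrees kept to at most~$1$ in whichever part, regardless of the colour of $v$. The natural choice is to use the colour-forcing gadgets already available for the case $k_2\ge 2$ — but those rely on Thomassen's $T_{k_2-1}$, which does not exist for $k_2=1$, so instead I use the self-loop-free device from the proof of Lemma~\ref{lem:regstrong} for the $k_1=0$ case: hang off $v$ new vertices each of which has two out-neighbours, chained so that in any $(\Delta^+=0,\Delta^+\le 1)$-partition they are $2$-colourable and contribute nothing problematic; this is exactly the ``adding a new vertex with two out-neighbours'' remark used there. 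After this step $D_1$ has $\Delta^+(D_1)=p-1$ and still has a $(\Delta^+\le 0,\Delta^+\le 1)$-partition iff $D$ does.

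Next I would invoke Lemma~\ref{lem:regstrong} with $k_1=0$, $k_2=1$: since $1=k_2$ and $0=k_1<k_2$ we are in the branch of that lemma that handles $0=k_1<k_2$, and it produces from $D_1$ a strong $p$-out-regular digraph $D_2$ that has a $(\Delta^+\le 0,\Delta^+\le 1)$-partition iff $D_1$ does. Chaining the equivalences, $D_2$ is a Yes-instance iff $D$ is, $D_2$ is strong and $p$-out-regular, and the whole construction is polynomial; this proves the theorem. (If one prefers not to route through Lemma~\ref{lem:regstrong} — for instance to keep the proof self-contained — one can instead directly add connectors: by Lemma~\ref{lem:connector}'s $(\Delta^+=0,\Delta^+\le k_2)$-variant, put a $(v,u)$-$(p-d^+(v),p)$-connector along a spanning closed walk through all vertices, using the observation in Lemma~\ref{lem:regstrong} that connectors may be placed along arcs of the instance when $k_1=0$ provided the tail and head are not forced into the same class.)

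The main obstacle is getting the $k_1=0$ bookkeeping right in the padding step: because a class with $\Delta^+=0$ must be an independent set, every gadget hung on a vertex $v$ must be designed so that the arc(s) from $v$ into the gadget (or from the gadget into $v$) never create a monochromatic out-arc at $v$ in the ``$0$'' side — i.e. the gadget must be colourable consistently with \emph{both} possible colours of $v$, and whenever $v$ lands in the ``$0$'' class the gadget vertex adjacent to $v$ must be free to take the other colour. This is precisely the extra care that the $0=k_1<k_2$ branch of Lemma~\ref{lem:regstrong} already takes (placing connectors on arcs, using that no arc can be monochromatic in class~$1$), so the work is to check that the same reasoning survives the initial out-degree padding; once that is verified, the rest is routine.
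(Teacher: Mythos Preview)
Your plan has a genuine gap at $p=2$, which is precisely the hard case. Lemma~\ref{lem:regstrong} turns ${\cal NP}$-completeness for digraphs with $\Delta^+\le q$ into ${\cal NP}$-completeness for strong $(q+1)$-out-regular digraphs. To reach strong $2$-out-regular digraphs via this lemma you would need the problem to be ${\cal NP}$-complete for digraphs with $\Delta^+\le 1$; but such digraphs are disjoint unions of in-trees attached to directed paths and cycles, and {\sc $(\Delta^+\le 0,\Delta^+\le 1)$-Partition} is easily polynomial there. Your ``pad the maximum out-degree up to $p-1$'' step does not help: Fraenkel's instances already have $\Delta^+=2$, so for $p=2$ you would be \emph{lowering} the maximum out-degree to $1$, which you cannot do while preserving the answer. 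For $p\ge 3$ your padding is harmless but unnecessary, since $\Delta^+\le 2\le p-1$ already and Lemma~\ref{lem:regstrong} applies directly.

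The paper therefore does the opposite of what you propose: it \emph{first} proves the $p=2$ case by hand and \emph{then} invokes Lemma~\ref{lem:regstrong} to bootstrap to all larger $p$. The $p=2$ argument is a direct reduction from 3-SAT via kernels (a $(\Delta^+\le 0,\Delta^+\le 1)$-partition of a $2$-out-regular digraph is exactly a kernel). It rebuilds Fraenkel's digraph $G$ with $\Delta^+\le 2$ and then attaches a small fixed digraph $H$ on six vertices together with a path through the clause gadgets so that (i) the result $G'$ is strong and $2$-out-regular, and (ii) $G'$ has a kernel iff $G$ does. The verification of (ii) uses the specific structure of $H$ (one shows any kernel of $G'$ avoids the ``sink'' vertex $d$ of $H$, so its restriction to $V(G)$ is a kernel of $G$). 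This ad hoc construction is exactly what your plan is missing; the connector machinery of Lemma~\ref{lem:connector} cannot replace it because connectors need at least $p\ge 2$ internal out-degree and produce a $(p{+}1)$-out-regular output.
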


\begin{proof}
By Lemma \ref{lem:regstrong}, it suffices to prove the statement for
$p=2$. A {\bf kernel} in a digraph $D$ is an independent set $K$ of
vertices such that every vertex in $V(G)\setminus K$ has an
out-neighbour in $K$.  Note that $(V_1,V_2)$ is a $(\Delta^+\leq
0,\Delta^+\leq 1)$-partition of the 2-out-regular digraph $D$ if and
only if $V_1$ is a kernel of $D$. We first recall a (slightly simpler
version of) the proof from \cite{fraenkelDAM3} that deciding whether a
digraph has a kernel is ${\cal NP}$-complete for digraphs of maximum
out-degree 2 and then modify that reduction to show that it is $\cal
NP$-complete for strong $2$-out-regular digraphs.

Let $W$ denote the digraph defined by 
$$V(W) = \{z_1,\ldots{},z_9\} \mbox{~~~and~~~}
A(W)=\{z_1z_2,z_2z_3,z_3z_1,z_3z_4,z_4z_5,z_5z_6,z_5z_7,z_6z_8,z_7z_9\}.$$
Now let $\cal F$ be an instance of 3-SAT with variable
$x_1,\ldots{},x_n$ and clauses $C_1,\ldots{},C_m$. Free to duplicate
one clause, we may assume that $m$ is odd.  Form the digraph $G=G(\cal
F)$ by taking one copy $W_j$ of $W$ for each clause $C_j$, $j\in [m]$
(denoting the vertices of $W_j$ by $z_{j,q}$, $q\in [9]$) and adding
$2n$ new vertices $v_1,\bar{v}_1,\ldots{},v_n,\bar{v}_n$, where
$v_i,\bar{v}_i$ correspond to the literals $x_i,\bar{x}_i$ as well as
the arcs $v_i\bar{v}_i,\bar{v}_iv_i$ for $i\in [n]$. Finally, we add
three arcs from each $W_j$ to the vertices that correspond to its
literals so that the vertex $z_{j,8}$ is joined to the vertex
corresponding to the first literal and the vertex $z_{j,9}$ is joined
to the two vertices corresponding to the second and third literal of
$W_j$. Thus if $W_j=(x_4\vee{}x_5\vee{}\bar{x}_8)$, then we add the
arcs $z_{j,8}v_4,z_{j,9}v_5,z_{j,9}\bar{v}_8$. This completes the
construction of $G$. Note that if $K$ is a kernel of $G$, then for
every $j\in [m]$ we have either $\{z_{j,2},z_{j,4},z_{j,6}\}\subset K$
or $\{z_{j,2},z_{j,4},z_{j,7}\}\subset K$ (or both) and this implies
that $|K\cap \{z_{j,8},z_{j,9}\}|\leq 1$. From this it follows that at
least one of the vertices corresponding to the literals of $C_j$ will
belong to $K$. For each $i\in [n]$ we have precisely one of
$v_i,\bar{v}_i$ in $K$ as these vertices are adjacent. Now it is easy
to see that $G$ has a kernel if and only if $\cal F$ is
satisfiable. This shows that deciding whether a digraph has a kernel
and hence {\sc $(\Delta^+\leq 0,\Delta^+\leq 1)$-Partition} is ${\cal
  NP}$-complete for digraphs of maximum out-degree 2.

Let us now prove that it is $\cal NP$-complete for strong
$2$-out-regular digraphs. Note that in $G$ every vertex has out-degree
at least $1$. Let $H$ be the digraph on six vertices $a,b,c,d,e,f$ and
the arcs $de,ef, fd, da, eb, fc, ae,bd,bf,cd,ce$.  Let $G'$ be the
digraph obtained from the disjoint union of $G$ and $H$ and a directed
path $(a_1, \dots , a_m)$ by identifying $a$ and $a_1$ and adding
the arc $a_mz_{m,3}$, the arcs $a_jz_{j,1}$ for $j\in [m]$ and
the arcs $ud$ for every vertex $u$ having out-degree $1$ in $G$.
Clearly, the digraph $G'$ is strong and $2$-out-regular.

Finally let us now prove that $G'$ has a kernel if and only if $G$ has
one. This will immediately imply the result.  If $G$ has a kernel $K$,
then one can easily check that $K\cup \{b,c\}\cup \{a_j \mid
j~\mbox{odd}\}$ is a kernel of $G'$ (recall that $m$ is odd and that
$K$ contains none of $z_{j,1}$, $z_{j,3}$).  Assume now that $G'$ has
a kernel $K'$.  We have $d\notin K'$, for otherwise $b$ and $f$ are
not in $K'$ (because $K'$ is an independent set) and so $e$ has no
out-heighbour in $K'$, a contradiction.  Now all arcs leaving $G$ in
$G'$ have head $d$, so every vertex of $G$ has an out-neighbour in
$K'\cap V(G)$. Hence $K'\cap V(G)$ is a kernel of $G$.
\end{proof}

\begin{theorem}
\label{degrees3}
Let $k,p$ be two positive integers $k$ such that $p\geq k+2$. Then
{\sc $(\Delta^+\leq k,\Delta^+\leq k)$-Partition} is polynomial-time
solvable for digraphs of maximum out-degree $k+1$ and ${\cal
  NP}$-complete on the class of strong $p$-out-regular digraphs.
\end{theorem}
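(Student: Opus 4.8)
For the polynomial half I would simply observe that if $\Delta^+(D)\le k$ the answer is always `Yes', while if $\Delta^+(D)=k+1$ a $(\Delta^+\le k,\Delta^+\le k)$-partition of $D$ is precisely a $1$-max-out-degree-reducing $2$-partition, so Theorem~\ref{thm:struc-k1-p1}(ii) together with Theorem~\ref{thm:even-cycle} decides it in polynomial time; since $\Delta^+(D)$ is computed in linear time this settles the class of digraphs of maximum out-degree $k+1$. Membership in ${\cal NP}$ being clear, the rest of the plan is devoted to ${\cal NP}$-hardness on strong $p$-out-regular digraphs for each fixed $p\ge k+2$.

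The heart of the reduction is an \emph{equality gadget} forcing two prescribed vertices into the same part. Given $u,w$, I take a fresh copy $T$ of Thomassen's digraph $T_k$ and add the arcs $tu$ and $tw$ for every $t\in V(T)$, so that every vertex of $T$ has out-degree exactly $k+2$. In any $(\Delta^+\le k,\Delta^+\le k)$-partition of a digraph containing this gadget, Proposition~\ref{monochromatic} applied to $T\cong T_k$ yields a vertex $t^\dagger$ all of whose $k$ out-neighbours inside $T$ lie in its own part; since own-part out-degrees are capped at $k$, the two remaining out-neighbours $u$ and $w$ of $t^\dagger$ must lie in the other part, so $u$ and $w$ are in the same part. (When $p>k+2$ the vertices of $T$ later acquire extra out-arcs, which only strengthens the argument.) Two observations shape the rest. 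First, no \emph{absolute} part can be forced, because exchanging the two parts maps good partitions to good partitions; I will therefore reduce from the part-symmetric problem {\sc Monotone Not-all-equal-3-SAT}, where equality gadgets suffice. Second, the same count shows why $p=k+2$ is exactly the threshold: with maximum out-degree only $k+1$ the equality gadget cannot be assembled at all, in agreement with the polynomial half.

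Given an instance ${\cal F}$ with variables $x_1,\dots,x_n$ and clauses $C_1,\dots,C_m$, I would construct $D=D({\cal F})$ thus. For each variable $x_i$ create a vertex $u_i$. For each clause $C_j=(x_a\vee x_b\vee x_c)$ create a vertex $q_j$ and $k-1$ fresh vertices $f^j_1,\dots,f^j_{k-1}$; add the arcs $q_ju_a,\ q_ju_b,\ q_jf^j_1,\dots,q_jf^j_{k-1}$ (so $d^+(q_j)=k+1$), and attach one equality gadget linking $q_j$ to $u_c$ and one equality gadget linking each $f^j_\ell$ to $u_c$. Then in any good partition $q_j$ and all the $f^j_\ell$ land in $u_c$'s part, so $q_j$ already has $k-1$ own-part out-neighbours among the $f^j_\ell$; its remaining budget $1$ forces at most one of $u_a,u_b$ to be in $u_c$'s part, i.e. $u_a,u_b,u_c$ are not all in the same part. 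Conversely, from a not-all-equal assignment I put $u_i$ in the part given by its truth value, put each $f^j_\ell$ and each $q_j$ in $u_c$'s part, put each equality-gadget copy of $T_k$ entirely in the part opposite to the pair it points at (each of its vertices then has exactly $k$ own-part out-neighbours), and finally extend across the connectors added in the last step. Thus $D$ admits a $(\Delta^+\le k,\Delta^+\le k)$-partition iff ${\cal F}$ is not-all-equal satisfiable, and since $k$ is fixed (so $|V(T_k)|$ is a constant) $D$ has polynomial size.

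Finally I would make $D$ strongly connected and $p$-out-regular without changing the answer. For regularity, add to every vertex of out-degree $d<p$ a private $(v,v)$-$(p-d,p)$-connector: by Lemma~\ref{lem:connector} this preserves solvability, and by the definition of the connector each new vertex ends with out-degree exactly $p$. For strong connectivity, thread one closed directed walk through all vertices along the internal directed $x$-to-$y$ paths of the connectors, arranging that some connector's terminal vertex points into each (internally strong) copy of $T_k$ so that these copies acquire in-arcs --- exactly the device used in the proofs of Lemma~\ref{lem:regstrong} and Theorem~\ref{k_1<k_2}. The handful of auxiliary vertices created on the way can all be given out-degree at most $k$, so they never spoil a $(\Delta^+\le k,\Delta^+\le k)$-partition. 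The one step I expect to require real care is the equality gadget itself --- in particular, recognising that Thomassen's no-even-cycle obstruction can be turned into a \emph{positive} forcing tool; after that, everything follows the template already established in this section.
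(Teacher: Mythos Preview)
Your proposal is correct and follows essentially the same blueprint as the paper: the polynomial half via Theorem~\ref{thm:struc-k1-p1}, the identical $T_k$-based forcing/equality gadget (the paper calls it the \emph{forcing gadget} with head $\{x,x'\}$), a monotone NAE-SAT reduction, and connectors (Lemma~\ref{lem:connector}) for out-regularity and strong connectivity. The one substantive design difference is the clause encoding. The paper takes clauses of size $k+2$ and, for each clause $C_j$, puts the complete digraph on its $k+2$ per-occurrence literal copies together with a common sink $t_j$; then the $\Delta^+\le k$ bound alone forbids all $k+2$ copies from landing in one part. You instead keep clause size $3$, use a single vertex per variable, and manufacture the degree pressure by giving $q_j$ exactly $k-1$ filler out-neighbours $f^j_\ell$, each tied to $u_c$ by its own equality gadget. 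Both encodings are sound; yours avoids the paper's implicit appeal to the ${\cal NP}$-completeness of NAE-$(k+2)$-SAT at the price of $k$ equality gadgets per clause rather than one per consecutive pair of variable occurrences. Your last paragraph is a little loose --- private $(v,v)$-connectors by themselves do not make the digraph strong, so some connectors must be routed as $(v,w)$ with $w$ chosen in a $T_k$ copy or in the next vertex of a spanning cycle, exactly as the paper does in building $D'_{\cal F}$ --- but since you already point to Lemma~\ref{lem:regstrong} and Theorem~\ref{k_1<k_2} for this device, the fix is clear.
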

\begin{proof} 
The first part of the claim follows from Theorem
\ref{thm:struc-k1-p1}. Below we show how make a reduction from {\sc
  Monotone Not-All-Equal 3-SAT} to the $(\Delta^+\leq k,\Delta^+\leq
k)$-partition problem in strong $(k+2)$-out-regular
digraphs. Combining this with Lemma \ref{lem:regstrong} proves the
theorem, as $k>0$.

The reduction makes use of the following {\bf forcing gadget}, namely
the digraph $F$ whose vertex set is the union of $X=\{x, x'\}$,
$Y=T_k$ and whose arc set is the union of the arcs of $T_k$ and all
possible arcs from $Y$ to $X$.  The {\bf head} of a forcing gadget is
the set $X$.

\begin{claim}\ 
\begin{itemize}
\item[(i)] In a forcing gadget, all vertices have out-degree $k+2$,
  except those of the head which have out-degree $0$.
\item[(ii)] In any $(\Delta^+ \leq k,\Delta^+ \leq k)$-partition of
  a digraph which contains a copy of the forcing gadget as an induced
  subdigraph, the two vertices of the head are in the same part.
\end{itemize}
\end{claim}
\begin{subproof}
(i) follows from the definition of the forcing gadget as $T_k$ is $k$-out-regular.

(ii) follows from the fact that $Y=T_k$ has no $(\Delta^+\leq k-1,
  \Delta^+\leq k-1)$-partition, implying that in any 2-partition
  $(V_1,V_2)$ of $F$ some vertex of $Y$ already has its $k$
  out-neighbours in $Y$ in the same set $V_i$ as itself and hence both
  $x$ and $x'$ must belong to $V_{3-i}$.
\end{subproof}

Let $\cal F$ be an instance of {\sc Monotone Not-All-Equal $(k+2)$-SAT} on
$n$ variables $x_1, \dots , x_n$ and $m$ clauses $C_1, \dots, C_m$.
For every $i\in [n]$, let $j_1(i) < j_2(i) \cdots < j_{m(i)}(i)$ be
the indices of those clauses in which variable $x_i$ occurs and let
$J(i)=\{j_1(i), \dots, j_{m(i)}(i)\}$.  For each $j\in [m]$ and $q\in
[k+2]$, let $a_{q,j}$ be the unique integer such that if $C_j=x_{i_1}\vee
x_{i_2} \vee x_{i_3}$, then $x_{i_q}$ occurs exactly $a_{q,j}-1$ times
among the clauses $C_1,\ldots{},C_{j-1}$.

Let $D_{{\cal F}}$ be the digraph constructed as follows.  For all
$i\in [n]$, we create a variable gadget $VG_i$ as follows.  We first
create the vertices $\{x_i^{j} \mid j\in J(i)\}$.  Then for all $1\leq
p < m(i)$, we add a forcing gadget with head $\{x_i^{j_p(i)},
x_i^{j_{p+1}(i)}\}$. Let $Y_i^p$ be the set corresponding to $Y$ in
this forcing gadget. This will force all the vertices of $\{x_i^{j}
\mid j\in J(i)\}$ to be in the same part for any $(\Delta^+\leq
k,\Delta^+\leq k)$-partition.

Then for every clause $C_j=x_{i_1}\vee x_{i_2} \vee\ldots\vee
x_{i_{k+2}}$, we add a vertex $t_j$, all the arcs from the set
$\{x_{i_1}^{a_{1,j}}, x_{i_2}^{a_{2,j}},\ldots{},
x_{i_{k+2}}^{a_{k+2,j}}\}$ to $t_j$ and the arcs of the complete
digraph on $\{x_{i_1}^{a_{1,j}}, x_{i_2}^{a_{2,j}},
\ldots{},x_{i_{k+2}}^{a_{k+2,j}}\}$.

\medskip

Let $D'_{{\cal F}}$ be the digraph obtained from $D_{{\cal F}}$ as
follows.  Add a set of $3m-n$ new vertices $U=\{u_1, \dots,
u_{3m-n}\}$ and let $f$ be a bijection between $U$ and $\{Y_i^p \mid
i\in [n], 1\leq p\leq m(i)-1\}$.  For each $j\in [3m-n]$, we add a
$(u_j,v_j)$-$(1,k+2)$-connector with $v_j$ being an arbitrary vertex
in $f(u_j)$, and a $(u_j, u_{j+1})$-$(k+1,k+2)$-connector (with
$u_{3m-n+1}=u_1$).  Finally, for each $j\in [m]$, add a
$(t_j,u_1)$-$(k+2,k+2)$-connector.  We can easily check that $D'_{\cal
  F}$ is strong and $(k+2)$-out-regular.

\medskip

Let us now prove that $D'_{\cal F}$ has a $(\Delta^+\leq
k,\Delta^+\leq k)$-partition if and only if $\cal F$ admits an
assignment such that each clause contains a true literal and a false
literal.  By Lemma~\ref{lem:connector}, as $k>0$, it is equivalent to
prove that $D_{\cal F}$ has a $(\Delta^+\leq k,\Delta^+\leq
k)$-partition if and only if $\cal F$ admits an assignment such that
each clause contains a true literal and a false literal.

First suppose that $\phi$ is a truth assignment such that
$\phi(x_i)\in \{true,false\}$ and each clause contains at least one
true and one false variable.  Define the following 2-colouring of
$V(D_{{\cal F}})$: for each $i\in [n]$ colour all vertices of
$\{x_i^{j} \mid j\in J(i)\}$ by colour 1 and those of
$\bigcup_{p=1}^{m(i)-1} Y_i^p$ by $2$ if $\phi(x_i)=true$ and
otherwise colour all vertices of $\{x_i^{j} \mid j\in J(i)\}$ by 2 and
those of $\bigcup_{p=1}^{m(i)-1} Y_i^p$ by $1$. Now each $t_j$, $j\in
[m]$, has at least one in-neighbour of colour $i$ for $i\in [2]$. If
it has precisely one of colour $i$, we colour it by colour $i$ and
otherwise we colour it arbitrarily. Now it is easy to see that letting
$V_i$ be the set of vertices of colour $i$, $i=1,2$, we obtain the
desired $2$-partition of $D_{\cal F}$.

Assume now that $(V_1,V_2)$ is a good $2$-partition of $D_{\cal F}$.
The forcing gadgets ensure that in every $(\Delta^+\leq k,\Delta^+\leq
k)$-partition $(V_1,V_2)$ of $V(D_{{\cal F}})$ all vertices of
$\{x_i^{j} \mid j\in J(i)\}$ belong to the same set in the partition
for all $i\in [n]$.  Furthermore, because of the complete subdigraphs
on the vertices
$\{x_{i_1}^{a_{1,j}},x_{i_2}^{a_{2,j}},\ldots{},x_{i_{k+2}}^{a_{k+2,j}}\}$,
$j\in [m]$, at least one of these vertices is in $V_1$ and at least
one of them is in $V_2$. Thus if we assign $x_i$ the value $true$ if
$\{x_i^{j} \mid j\in J(i)\}\subset V_1$ and $false$ otherwise, each
clause will have at least one true and at least one false literal.
\end{proof}

\iffalse
\begin{problem}
Let $k_1, k_2, p$ be non-negative integers.  What is the complexity of
{\sc $(\Delta^+\leq k_1,\Delta^+\leq k_2)$-Partition} on the class of
strong $p$-out-regular digraphs ?
\end{problem}
\fi

Combining our results above we obtain the following complete
classification in terms of $k_1,k_2$.

\begin{theorem}
Let $k_1, k_2$ be non-negative integers. The {\sc $(\Delta^+\leq
    k_1,\Delta^+\leq k_2)$-Partition} problem is
\begin{itemize}
\item polynomial-time solvable for all digraphs when $k_1=k_2=0$;
\item polynomial-time solvable for digraphs of maximum degree $p\leq \max\{k_1,k_2\}$;
\item ${\cal NP}$-complete for strong $p$-out-regular digraphs for all
  $p\geq \max\{k_1,k_2\}+1$ when $k_1\neq k_2$;
\item polynomial for $(k_2+1)$-out-regular digraphs and ${\cal
  NP}$-complete for strong $p$-out-regular digraphs for all $p\geq
  \max\{k_1,k_2\}+2$ when $k_1=k_2$.
\end{itemize}
\end{theorem}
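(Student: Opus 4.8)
The plan is to assemble the statement from the results already established in this section by a case analysis on $k_1$ and $k_2$; by symmetry assume $k_1\le k_2$, so that $\max\{k_1,k_2\}=k_2$. The two polynomial items are immediate. When $k_1=k_2=0$ the question is simply whether $UG(D)$ is bipartite, which is decidable in linear time. When $p\le k_2$, every digraph of maximum out-degree at most $p$ admits the trivial partition $(\emptyset,V(D))$, so the answer is always `Yes'. Both observations were already noted in the text, so the real work lies in the two hardness items.

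For the third bullet, where $k_1<k_2$, I would split on whether $k_2\ge 2$, since Theorem~\ref{k_1<k_2} requires $\max\{1,k_1\}<k_2$. If $k_2\ge 2$, that theorem yields ${\cal NP}$-completeness already for strong $(k_2+1)$-out-regular digraphs; to reach every $p\ge k_2+1$ I would then iterate Lemma~\ref{lem:regstrong}: hardness for strong $(k_2+1)$-out-regular digraphs entails hardness for digraphs of maximum out-degree $k_2+1$, hence for strong $(k_2+2)$-out-regular digraphs, and so on. If instead $k_2=1$, then $k_1=0$ and Theorem~\ref{0-1part} directly supplies ${\cal NP}$-completeness for strong $p$-out-regular digraphs for every $p\ge 2=k_2+1$. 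For the fourth bullet, where $k_1=k_2=k$, the case $k=0$ is subsumed by the first item, and for $k\ge 1$ Theorem~\ref{degrees3} gives both halves at once: polynomial-time solvability for digraphs of maximum out-degree $k+1$ (in particular for $(k+1)$-out-regular digraphs) and ${\cal NP}$-completeness on strong $p$-out-regular digraphs for every $p\ge k+2$.

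I do not anticipate a genuine obstacle, as all the mathematical content resides in Theorems~\ref{thm:struc-k1-p1}, \ref{k_1<k_2}, \ref{0-1part} and~\ref{degrees3}, together with Lemma~\ref{lem:regstrong}. The only points needing care are the boundary case $k_2=1$ in the third bullet, which lies outside the hypothesis of Theorem~\ref{k_1<k_2} and must instead be handled via Theorem~\ref{0-1part}, and the bookkeeping required to make the ranges $p\ge\max\{k_1,k_2\}+1$ and $p\ge\max\{k_1,k_2\}+2$ match up exactly with what the component results deliver.
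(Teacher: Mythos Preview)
Your proposal is correct and matches the paper's approach: the paper simply states that this theorem is obtained by combining the preceding results, and your case analysis spells out exactly that combination, including the use of Lemma~\ref{lem:regstrong} to bootstrap from $p=k_2+1$ to all larger $p$, and the separate handling of the boundary case $(k_1,k_2)=(0,1)$ via Theorem~\ref{0-1part}.
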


Theorems \ref{degrees3} and \ref{thm:struc-k1-p1} this immediately
yield the following.
\begin{theorem}\label{thm:recap}
{\sc $k$-all-out-degree reducing $2$-partition} and {\sc
  $k$-max-out-degree-reducing $2$-partition} are polynomial-time
solvable for $k=1$ and ${\cal NP}$-complete for all integers $k\geq 2$
even when the input is a strong out-regular digraph.
\end{theorem}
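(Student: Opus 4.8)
The plan is to derive Theorem~\ref{thm:recap} directly from the two structural/complexity results already established, handling the two values of $k$ and the two kinds of partition in turn. For $k=1$, Theorem~\ref{thm:struc-k1-p1} gives a characterization of digraphs admitting a $1$-all-out-degree-reducing-$2$-partition (every non-trivial terminal strong component contains an even directed cycle) and of those admitting a $1$-max-out-degree-reducing-$2$-partition (every terminal strong component contains an even directed cycle or a vertex of out-degree less than $\Delta^+(D)$); moreover, the theorem states that the required partition can be constructed in polynomial time when it exists. Since the terminal strong components of $D$ can be found in linear time, and since, by Theorem~\ref{thm:even-cycle}, testing a given digraph for an even directed cycle is polynomial, each of the two characterizations can be checked in polynomial time. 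Hence both {\sc $1$-all-out-degree reducing $2$-partition} and {\sc $1$-max-out-degree-reducing $2$-partition} are polynomial-time solvable, which is the $k=1$ part of the statement.

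For $k\geq 2$ I would invoke Theorem~\ref{degrees3}. Fix $k\geq 2$ and apply that theorem with, say, $p=k+2$ (which satisfies $p\geq k+2$): it asserts that {\sc $(\Delta^+\leq k,\Delta^+\leq k)$-Partition} is ${\cal NP}$-complete on strong $p$-out-regular digraphs. The key observation, already noted in the introduction, is that for an $r$-out-regular digraph a $(\Delta^+\leq r-k,\Delta^+\leq r-k)$-partition is exactly the same object as a $k$-max-out-degree-reducing $2$-partition and as a $k$-all-out-degree-reducing $2$-partition. Taking $r=p=2k+2$, so that $r-k=k+2$, the instances produced by Theorem~\ref{degrees3} — strong $(2k+2)$-out-regular digraphs on which deciding the existence of a $(\Delta^+\leq k+2,\Delta^+\leq k+2)$-partition is ${\cal NP}$-complete — are precisely instances on which deciding the existence of a $k$-all-out-degree-reducing $2$-partition (equivalently, a $k$-max-out-degree-reducing $2$-partition) is ${\cal NP}$-complete. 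Membership in ${\cal NP}$ is immediate since a candidate partition can be verified in polynomial time. This yields ${\cal NP}$-completeness of both problems for every $k\geq 2$, even restricted to strong out-regular digraphs, completing the proof.

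The only subtlety to be careful about — and the one place where the argument is more than mechanical — is the translation between the ``reducing'' formulation and the ``$(\Delta^+\leq k_1,\Delta^+\leq k_2)$'' formulation: one must make sure that on an $r$-out-regular input $D$ we have $\Delta^+(D)=r$ and $\max\{0,\Delta^+(D)-k\}=r-k$ for the chosen $k$ (which holds here since $r=2k+2>k$), so that the two partition notions genuinely coincide and no instance is lost or gained in the reduction. With $r$ chosen so that $r-k\geq k+2$, this identification is exact, and the reduction from {\sc Monotone Not-All-Equal $(k+2)$-SAT} built into Theorem~\ref{degrees3} transfers verbatim. I do not anticipate any real obstacle beyond bookkeeping the parameters; the substantive work has been done in Theorems~\ref{thm:struc-k1-p1} and~\ref{degrees3}.
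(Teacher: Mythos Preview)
Your overall strategy is exactly the paper's: deduce the statement directly from Theorems~\ref{thm:struc-k1-p1} and~\ref{degrees3}. The $k=1$ part is fine. In the $k\geq 2$ part, however, your parameter choices are inconsistent. You first invoke Theorem~\ref{degrees3} with its own parameter $k$ equal to the reducing-$k$ and $p=k+2$, which yields ${\cal NP}$-completeness of {\sc $(\Delta^+\leq k,\Delta^+\leq k)$-Partition} on strong $(k+2)$-out-regular digraphs; but on a $(k+2)$-out-regular digraph the $k$-reducing problem is {\sc $(\Delta^+\leq 2,\Delta^+\leq 2)$-Partition}, which only coincides with the former when $k=2$. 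Two sentences later you silently switch to $r=p=2k+2$ and {\sc $(\Delta^+\leq k+2,\Delta^+\leq k+2)$-Partition}, which is a \emph{different} instantiation of Theorem~\ref{degrees3} (its $k$-parameter now being $k+2$). Either instantiation can be made to work, but not the mixture you wrote; and the side condition ``$r-k\geq k+2$'' in your last paragraph is not the relevant hypothesis of Theorem~\ref{degrees3} either.

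The clean fix is to apply Theorem~\ref{degrees3} with its $k$-parameter equal to the reducing-$k$ and $p=2k$: the hypothesis $p\geq k+2$ becomes $2k\geq k+2$, i.e.\ $k\geq 2$, and on strong $2k$-out-regular digraphs a $(\Delta^+\leq k,\Delta^+\leq k)$-partition is exactly a $k$-all-out-degree-reducing (equivalently $k$-max-out-degree-reducing) $2$-partition, since $\Delta^+(D)-k=2k-k=k$. With this single adjustment your argument is correct and matches the paper's (one-line) proof.
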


\section{Out-degree reducing $p$-partitions for $p\geq 3$.} \label{sec:p-part}
%%%%%%%%%%%%%%%%%%%%%%%%%%%%%%%%%%%%%

All  our complexity results so far dealt with $2$-partition
problems.  In this section we deal with $p$-partitions for $p\geq 3$.

The next proposition implies that {\sc $k$-all-out-degree-reducing $p$-partition} and {\sc $k$-max-out-degree-reducing $p$-partition}
are polynomial-time solvable when $p\geq 2k+1$,
because the answer is trivially `yes'.

\begin{proposition}\label{prop:Alon-gen}
  Every digraph has a $k$-all-out-degree-reducing $(2k+1)$-partition
  and this is best possible.
\end{proposition}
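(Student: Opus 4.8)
The plan is to prove the existence statement by discarding all but $k$ out-arcs at every vertex, which turns $D$ into a digraph whose underlying graph is automatically sparse enough to be $(2k+1)$-colourable, and to prove optimality with a regular tournament.

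For existence, I would first build a spanning subdigraph $D'$ of $D$ by keeping, for every vertex $v$, exactly $\min\{d^+_D(v),k\}$ of its out-arcs (chosen arbitrarily) and deleting the remaining $\max\{0,d^+_D(v)-k\}$ of them, so that $\Delta^+(D')\le k$. The key observation is that this already forces $UG(D')$ to be $2k$-degenerate: for every non-empty $T\subseteq V(D)$ we have $|A(\induce{D'}{T})|=\sum_{v\in T}d^+_{\induce{D'}{T}}(v)\le k|T|$, so $UG(\induce{D'}{T})$ has at most $k|T|$ edges and hence contains a vertex of degree at most $2k$. A $2k$-degenerate graph is properly $(2k+1)$-colourable (colour the vertices greedily along a degeneracy ordering, each one seeing at most $2k$ previously coloured neighbours), so pick a proper colouring $c:V(D)\to [2k+1]$ of $UG(D')$ and set $V_i=c^{-1}(i)$; padding with empty classes if fewer colours are actually used, this is a $(2k+1)$-partition of $D$.

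It then remains to check that $(V_1,\dots,V_{2k+1})$ is $k$-all-out-degree-reducing. Fix $v\in V_i$. Since $c$ is a proper colouring of $UG(D')$, no out-neighbour of $v$ in $D'$ lies in $V_i$; hence every out-neighbour of $v$ in $D$ lying in $V_i$ must be the head of one of the $\max\{0,d^+_D(v)-k\}$ out-arcs of $v$ that were deleted. Therefore $d^+_{\induce{D}{V_i}}(v)\le\max\{0,d^+_D(v)-k\}$, as required. For the `best possible' part I would exhibit, for each $k\ge 1$, a digraph with no $k$-all-out-degree-reducing $2k$-partition, namely the regular tournament $R_{2k+1}$ on vertex set $\Zed_{2k+1}$ with an arc from $i$ to $j$ exactly when $j-i$ is congruent modulo $2k+1$ to one of $1,\dots,k$. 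Then $R_{2k+1}$ is $k$-out-regular and every two vertices are adjacent, so $UG(R_{2k+1})=K_{2k+1}$; in any $k$-all-out-degree-reducing partition $(V_1,\dots,V_p)$ each vertex $v$ satisfies $d^+_{\induce{R_{2k+1}}{V(v)}}(v)\le\max\{0,k-k\}=0$, so no arc is monochromatic, i.e.\ the partition is a proper colouring of $K_{2k+1}$, forcing $p\ge 2k+1$.

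The argument is short, so there is no serious obstacle; the one point that deserves care is the sparsity claim, whose content is precisely that bounding \emph{only} the out-degrees of the auxiliary digraph already makes its underlying graph $2k$-degenerate — which is exactly what matches the value $2k+1$ and, for $k=1$, recovers Alon's $3$-partition (Theorem~\ref{alonsplit}). One should also make sure that ``at most $2k+1$ colours'' is interpreted as a genuine $(2k+1)$-partition by allowing empty parts.
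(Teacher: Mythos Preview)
Your proof is correct and follows essentially the same route as the paper's: pick $\min\{k,d^+(v)\}$ out-arcs per vertex to obtain a subdigraph whose underlying graph is $2k$-degenerate, properly $(2k{+}1)$-colour it, and verify that the resulting partition is $k$-all-out-degree-reducing; optimality via $k$-out-regular tournaments is also exactly what the paper uses. Your write-up simply spells out a few steps (the edge-count for degeneracy, the explicit rotational tournament, the verification of the partition property) that the paper leaves implicit.
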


\begin{proof}
For each vertex $v$ pick $\min\{k,d^+(v)\}$ arcs with tail in $v$.
Let $H$ be the subdigraph of $D$ induced by these arcs. Then $H$ has a
vertex of degree at most $2k$ and this holds for every subdigraph of
$H$, so $UG(H)$ is $2k$-degenerate and hence it is
$2k+1$-colourable. Let $(V_1,V_2,\ldots{},V_{2k+1})$ be a
$(2k+1)$-partition of $D$ induced by a $(2k+1)$-colouring of
$UG(H)$. It is easy to check that this is a
$k$-all-out-degree-reducing $(2k+1)$-partition since every arc of $H$
goes between two different sets in the partition.

The $k$-out-regular tournaments show that $2k+1$ is best possible for
each $k\geq 1$.  \end{proof}

The next result implies that {\sc $k$-all-out-degree-reducing $p$-partition} and {\sc $k$-max-out-degree-reducing $p$-partition}
 are also polynomial-time solvable when $p=2k$.

\begin{theorem}
Let $k \geq 2$.  A digraph $D$ admits a $k$-all-out-degree-reducing
$2k$-partition if and only if no terminal strong component of $D$ is a
$k$-regular tournament.
\end{theorem}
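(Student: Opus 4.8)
The plan is to mirror the structure of the proof of Theorem~\ref{thm:struc-k1-p1}(i), but with $2k$ colours instead of $2$ and with $k$-regular tournaments playing the role that odd cycles played there (namely, the obstruction). First I would establish the necessity direction: if a terminal strong component $X$ is a $k$-regular tournament, then it has $2k+1$ vertices and in any $2k$-partition of $V(D)$, the restriction to $V(X)$ uses at most $2k$ colours on $2k+1$ vertices, so two vertices of $X$ get the same colour; more carefully, one should show that no $2k$-colouring of a $k$-regular tournament $T$ leaves every vertex with out-degree at most $k-1$ inside its own colour class — equivalently, $UG(T)$ together with the orientation forces some colour class to contain a vertex with $k$ out-neighbours of its own colour. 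I expect the clean way to see this is a counting/degeneracy argument: a $k$-regular tournament on $2k+1$ vertices has every subtournament of order $s$ containing a vertex of out-degree $\ge \lceil (s-1)/2\rceil$, so when $s=2k+1$ one cannot properly $2k$-colour the "out-degree-$k$ witness digraph" $H$ (the one obtained by keeping, for each vertex, $\min\{k,d^+(v)\}$ out-arcs) — indeed $H=T$ itself is not $2k$-colourable in the relevant sense because it is not $(2k-1)$-degenerate. Since $X$ is terminal, any valid partition of $D$ would restrict to a valid partition of $X$, giving the contradiction.

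For the sufficiency direction I would argue exactly as in Theorem~\ref{thm:struc-k1-p1}(i). Let $X_1,\dots,X_r$ be the terminal strong components, none of which is a $k$-regular tournament, and let $S$ be the set of sinks of $D$. For each non-trivial $X_i$, I claim $X_i$ itself has a $k$-all-out-degree-reducing $2k$-partition: by Proposition~\ref{prop:Alon-gen}'s proof technique, the out-witness digraph $H_i$ of $X_i$ is $2k$-degenerate, and one must upgrade "$2k$-degenerate hence $(2k+1)$-colourable" to "$2k$-colourable" using the hypothesis that $X_i$ is not a $k$-regular tournament. The key point is that $H_i$ is $2k$-regular on all of $V(X_i)$ only if $X_i$ is $k$-regular and the $\min\{k,d^+\}$ arcs we kept are all of them, i.e. $X_i$ is a $k$-regular tournament (since $X_i$ is strong of small order this forces a tournament) — ruled out — so $UG(H_i)$ has a vertex of degree $<2k$, and being $2k$-degenerate with one vertex of degree $\le 2k-1$, an inductive greedy colouring shows $UG(H_i)$ is $2k$-colourable. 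Having coloured each non-trivial terminal component validly and coloured $S$ with colour $1$, I extend inward: any uncoloured vertex with an out-arc to a coloured vertex receives a colour distinct from that out-neighbour's colour (possible since there are $2k\ge d^+(v)$... wait — not quite; one needs $2k$ colours to beat $d^+(v)$, which can be large). Here I would instead extend by choosing, for an uncoloured vertex $v$ with at least one coloured out-neighbour $u$, any colour $\ne c(u)$; this guarantees $v$ has an out-neighbour of a different colour, hence $d^+_{\induce{D}{V_{c(v)}}}(v)\le d^+_D(v)-1$. But we need $\le d^+_D(v)-k$.

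This is the main obstacle, and it shows the inward-extension must be more careful than in the $k=1$ case: colouring a vertex to differ from a single out-neighbour only saves one unit of out-degree, not $k$. I would resolve it by processing vertices in reverse topological order of the strong-component DAG and, when colouring a vertex $v$, choosing its colour to minimise the number of already-coloured out-neighbours sharing its colour — since $v$ has at least... no. The honest fix: colour $v$ to avoid the colours of as many out-neighbours as possible; more precisely, since at the moment we colour $v$ all of $v$'s out-neighbours outside $v$'s own (future) strong component are already coloured, and $v$'s out-neighbours inside its own strong component will be handled when that component is processed as a whole. So the right order is: process strong components in reverse topological order; within a strong component $C$ that is non-terminal, every vertex has an out-arc leaving $C$ (to an already-coloured vertex) — no, that's false too. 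The correct statement is that a non-terminal strong component has at least one arc leaving it; I would use this to seed a partial colouring of $C$ from outside and then invoke a local lemma: any strong digraph $C$ in which at least one vertex already has a forbidden colour, and which is not a $k$-regular tournament, admits a completion to a $k$-all-out-degree-reducing $2k$-colouring. Proving this local lemma — essentially a rooted/ordered version of the degeneracy argument, peeling vertices of low out-degree toward the seeded vertex — is where the real work lies, and I would expect it to be the technical heart of the proof, combined with the combinatorial fact that $k$-regular tournaments are the unique obstructions because they are the only strong digraphs on $2k+1$ vertices all of whose vertices have out-degree exactly $k$ with no "slack".
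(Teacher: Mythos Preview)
Your necessity argument is correct (and can be said more simply: a $k$-regular tournament has $2k+1$ vertices, so in any $2k$-partition two of them share a part; one of those two then has an out-neighbour in its own part, hence in-part out-degree at least $1>0=k-k$). The sufficiency argument, however, has a genuine gap that you yourself identify midway through: extending a colouring inward from the terminal components, one vertex at a time, guarantees only that each newly coloured vertex has \emph{one} out-neighbour of a different colour, reducing its in-part out-degree by $1$ rather than by $k$. Neither repair you sketch closes this. Processing strong components in reverse topological order does not help, since a vertex in a non-terminal component may have all its out-neighbours inside that same component. The ``local lemma'' you propose at the end is left unproved and in any case presupposes that you can already $2k$-colour an arbitrary strong digraph that is not a $k$-regular tournament --- which is essentially the full theorem. (Even your treatment of the terminal components is incomplete: the witness subdigraph $H_i$ may be a $k$-regular tournament even when $X_i$ is not --- take $X_i$ to be the complete symmetric digraph on $2k+1$ vertices and choose the $k$ out-arcs carelessly --- so the choice of $H_i$ needs justification you do not supply.)

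The paper does not extend outward from terminal components at all. It argues by induction on $|V(D)|+|A(D)|$ through three cases. If some vertex has out-degree greater than $k$, delete a carefully chosen out-arc (chosen so that no new $k$-regular-tournament terminal component is created) and apply induction; the same partition works for $D$. If $\Delta^+(D)\le k$ but $UG(D)$ is not $2k$-regular, delete a vertex of underlying degree at most $2k-1$, apply induction, and reinsert it in a part avoiding all of its at most $2k-1$ neighbours. Finally, if $\Delta^+(D)\le k$ and $UG(D)$ is $2k$-regular, then $D$ is a $k$-regular oriented graph whose underlying graph is neither complete (else $D$ would itself be a $k$-regular tournament) nor an odd cycle (since $k\ge 2$), so Brooks' theorem yields a proper $2k$-colouring of $UG(D)$, which is exactly the desired partition. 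The reduction to Brooks is the key idea your outline is missing.
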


\begin{proof}
First assume that some terminal component, $Q$, of $D$ is a
$k$-regular tournament.  This implies that every vertex in $Q$ has
out-degree $k$ in $D$ and for any $2k$-partition of $D$ there will be
two vertices from $Q$ in the same part, as $|V(Q)|=2k+1$.
Therefore some vertex will have out-degree at least $1$ in its part and therefore not have reduced its out-degree by $k$. This proves
one direction. We now prove the opposite direction.

Let $D$ be any digraph of order $n$ and size $m$ with no terminal
component isomorphic to a $k$-regular tournament.  We will now show
that $D$ has a $k$-all-out-degree-reducing $2k$-partition by induction
on $n+m$. Clearly this holds when $n+m \leq 3$ so assume that it also
holds for all digraphs, $D'$, with $|V(D')|+|E(D')| < n+m$.  We may
assume that $D$ is connected as otherwise we are done by using
induction on each connected component.  Let
$G$ be the underlying graph of $D$. We consider the following
three cases which exhaust all possibilities.

\2

{\bf Case 1. There exists a $x \in V(D)$ with $d^+(x) > k$.}  If
$N^+(x)$ is independent then let $v \in N^+(x)$ be arbitrary, and
otherwise let $u,v \in N^+(x)$ be chosen such that $uv \in A(D)$. Let
$D' = D \setminus xv$ (i.e. delete the arc $xv$ from $D$).  Let $Q'$ be any
terminal component in $D'$. If $x \not\in V(Q')$, then $Q'$ is also a
terminal component of $D$ and therefore not a $k$-regular
tournament. So suppose $x \in V(Q')$.  Recall that either $N^+(x)$ is
independent or $xuv$ is a path in $D$ which implies that $v \in
V(Q')$. Both cases imply that $Q'$ is not a tournament.  Therefore, by
induction, there is a $k$-all-out-degree-reducing $2k$-partition of
$D'$ and therefore also of $D$ (using the same partition). This
completes Case 1.

\2

{\bf Case 2. $\Delta^+(D) \leq k$ and $G$ is not $2k$-regular.}  Let
$w$ be a vertex having degree at most $2k-1$ in $G$. Let $D' = D
-w$. Assume that some terminal component, $Q'$, in $D'$ is a
$k$-regular tournament.  As $\Delta^+(D) \leq k$, this implies that
$Q'$ is also a terminal component in $D$, a contradiction. Therefore
no terminal component in $D'$ is a $k$-regular tournament and by
induction there is a $k$-all-out-degree-reducing $2k$-partition of
$D'$.  Now add $w$ to a different part to all of its at most
$2k-1$ neighbours in $G$.  This gives a $k$-all-out-degree-reducing
$2k$-partition of $D$.

\2

{\bf Case 3. $\Delta^+(D) \leq k$ and $G$ is $2k$-regular.}  Note that in that case $D$ is an oriented graph and $D$ is $k$-regular. 
Now $G$ is not a complete graph for otherwise $D$ would be $k$-regular tournament. 
Moreover, as $k \geq 2$,  the graph $G$ is not an odd cycle. Therefore, by Brook's Theorem, $G$ admits a proper 
$2k$-colouring. This $2k$-colouring gives us the desired
$k$-all-out-degree-reducing $2k$-partition of $D$.
\end{proof}

\begin{theorem} \label{thm1}
 If $k>1$ and $3 \leq p \leq 2k-1$, then  {\sc $k$-all-out-degree-reducing $p$-partition} and {\sc $k$-max-out-degree-reducing $p$-partition} are $\cal NP$-complete.
\end{theorem}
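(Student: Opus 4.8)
The plan is to reduce from {\sc $(\Delta^+\leq k',\Delta^+\leq k')$-Partition} on strong $(k'+2)$-out-regular digraphs, which is $\cal NP$-complete by Theorem~\ref{degrees3} for any positive integer $k'$, where $k'$ will be chosen so that a $(k'+2)$-out-regular digraph is a natural target for the $p$-partition problem at parameter $k$. Note first that {\sc $k$-all-out-degree-reducing $p$-partition} and {\sc $k$-max-out-degree-reducing $p$-partition} coincide on $r$-out-regular digraphs with $r\geq k$: both ask for a $p$-partition $(V_1,\ldots,V_p)$ with $\Delta^+(\induce{D}{V_i})\leq r-k$ for all $i$, equivalently every vertex has at most $r-k$ out-neighbours in its own part. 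So it suffices to prove $\cal NP$-completeness of the common problem on a suitable class of out-regular digraphs.

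The key reduction idea is a ``blow-up'' of the two-colour target into a $p$-colour target by attaching rigid gadgets that collapse the colour classes. Given an instance $D$ of {\sc $(\Delta^+\leq k-1,\Delta^+\leq k-1)$-Partition} on strong $(k+1)$-out-regular digraphs (take $k'=k-1$, so $k'+2=k+1$ and $k'=k-1\geq 1$), I would build a digraph $D^*$ that is $r$-out-regular for a suitable $r$ with the following property: in any valid $p$-partition of $D^*$, the vertices of $D$ use at most two of the $p$ colours, and conversely any $2$-partition of $D$ with $\Delta^+(\induce{D}{V_i})\leq k-1$ extends to a valid $p$-partition of $D^*$. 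To force vertices onto a pair of colours one can use a gadget built from a large out-regular clique-like digraph (for instance a $k'$-regular tournament, or the Thomassen digraph $T_j$) whose own out-degrees are already close to $r-k$, so that any vertex of the gadget that receives a colour shared by too many of its gadget-neighbours violates the out-degree bound; iterating such gadgets pins the ``free'' vertices of $D$ to a fixed set of two colours. Here the inequality $p\leq 2k-1$ is exactly what we need: with $p$ colours available and out-degree reduction by $k$, a clique on $2k$ vertices (or an appropriate regular tournament) cannot be $p$-coloured with every vertex having at most $k-1$ same-coloured out-neighbours, which is the pigeonhole engine of the forcing gadget — and this fails once $p\geq 2k$, matching the positive results in the preceding two theorems. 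After pinning, encode the clause/variable structure of $D$ exactly as in the proof of Theorem~\ref{degrees3}, using complete subdigraphs to split literal-colours and forcing gadgets to equalize occurrences of a variable; finally regularize and make strong by adding private connectors of the appropriate type (the natural analogue of Lemma~\ref{lem:connector} for this parameter regime, or a direct ad hoc regularization as in the proof of Theorem~\ref{degrees3}).

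Concretely, the steps in order are: (1) observe the equivalence of the all- and max- versions on out-regular inputs, so only one problem must be handled; (2) design the forcing gadget using a $p$-uncolourable-under-reduction-by-$k$ out-regular digraph on $2k$ vertices and verify, by a pigeonhole count using $p\leq 2k-1$, that it forces its ``head'' vertices into a common colour in any valid $p$-partition; (3) chain such gadgets to pin all vertices of the embedded instance $D$ to a two-element colour set, reducing the $p$-colour constraint on those vertices to the original $2$-colour constraint; (4) carry over the 3-SAT or Monotone NAE-SAT encoding and the variable/clause gadgets; (5) regularize and make the whole digraph strong by appending connectors, checking that this preserves the answer; (6) argue both directions of the correctness equivalence. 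I would expect the main obstacle to be step (2)–(3): getting a forcing gadget that is genuinely $r$-out-regular with the \emph{right} value of $r$ while still being rigid, since we no longer have the clean hook of ``$\Delta^+(\induce{}{V_i})$ smaller than $\Delta^+(D)$'' driving the even-cycle characterization — we must instead engineer the pigeonhole tightly so that one gadget vertex is forced to exceed its allowance of $r-k$ same-coloured out-neighbours. Making this quantitatively exact across the whole range $3\leq p\leq 2k-1$ (rather than just $p=3$ or $p=2k-1$) is where the care lies; a clean choice is to use a directed clique $\overleftrightarrow{K_{2k}}$ augmented so that each vertex has out-degree exactly $r$, noting that any $p$-colouring of its $2k$ vertices with $p<2k$ repeats a colour on two adjacent vertices, and then boosting so that two repeats already break the $\leq r-k$ bound. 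The remaining steps are routine adaptations of the arguments already used for Theorems~\ref{k_1<k_2} and~\ref{degrees3}.
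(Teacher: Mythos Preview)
Your plan diverges from the paper's, and the crucial step (2)--(3) has a real gap that is deeper than you flag. A gadget that forces its head vertices into a \emph{common} colour, when chained through all vertices of the embedded instance $D$, forces those vertices to share a \emph{single} colour, not a two-element palette; you cannot recover a genuine $2$-partition of $D$ that way. To pin $V(D)$ to a two-element colour set you would need, for every vertex of $D$, $p-2$ neighbours using $p-2$ pairwise distinct colours, \emph{and} those $p-2$ colours would have to be the same across all of $V(D)$. Arranging this globally while keeping out-degrees bounded is exactly the hard part, and nothing in your sketch does it. Your concrete suggestion of $\overleftrightarrow{K_{2k}}$ also fails the count you give: with $p\leq 2k-1$ colours on $2k$ vertices pigeonhole yields a colour class of size at least~$2$, but a vertex in such a class has only one same-coloured out-neighbour inside the clique, and this violates the bound $r-k$ only when $r\leq k$, whereas $\overleftrightarrow{K_{2k}}$ already has out-degree $2k-1>k$. ``Boosting'' makes $r$ larger, not smaller, so the inequality moves the wrong way.

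The paper bypasses all of this by reducing from {\sc $p$-Colourability} rather than from a $2$-partition problem. The key observation is that if one constructs a digraph $D$ with $\Delta^+(D)\leq k$, then both the all- and the max-version of the $k$-out-degree-reducing $p$-partition problem on $D$ ask precisely for a proper $p$-colouring of $UG(D)$ (the target out-degree is $\max\{0,k-k\}=0$ in every part). So it suffices to encode an arbitrary graph $G$ into such a $D$ with $UG(D)$ $p$-colourable iff $G$ is. The equality gadget $D_2(x,y)$ is built on $p+1$ vertices so that $UG(D_2(x,y))=K_{p+1}\setminus\{xy\}$: take a (nearly) regular tournament on $p-1$ vertices and add $x,y$, orienting the extra arcs so that every out-degree stays at most $k$; this is exactly where $p\leq 2k-1$ is used. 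In any proper $p$-colouring of this gadget $x$ and $y$ must coincide. Chaining $n$ copies yields vertices $x_1,\dots,x_n$ forced to one colour; one such chain per vertex of $G$, with a single cross-arc $x^i_j\to x^j_i$ for each edge $v_iv_j$ of $G$, then makes $UG(D)$ $p$-colourable iff $G$ is. This is shorter, handles the whole range $3\leq p\leq 2k-1$ uniformly, and avoids the two-colour pinning problem entirely.
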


\begin{proof}
 We give a reduction from {\sc $p$-Colourability} which consists in deciding whether a given digraph is $k$-colourable.
 This problem is well-known to be $\cal NP$-complete for all $p\geq 3$.

  We first need to define a gadget $D_2(x,y)$ as follows.
Let $T$ be a regular or almost regular tournament of order $p-1$ and let $V_1 = \{ v \, | \, d_T^+(v)=k-1 \}$.
Note that $V_1$ is empty if $p\leq 2k-2$ and $|V_1|=k-1=|V(T)|/2$ if $p=2k-1$.

Let $D_2(x,y)$ be the digraph obtained from a copy of $T$ by adding two vertices $x,y$ and all arcs from $V(T)\setminus V_1$ to $\{x,y\}$, all arcs from $V_1$ to $x$ and all arcs from $y$ to $V_1$.   
Note that $d^+(x) = 0$ and $d^+(y)=|V_1|$.  

 Note that in both cases above $x$ and $y$ are the only non-adjacent vertices in $D_2(x,y)$ and
$\Delta^+(D_2(x,y)) \leq k$. 

We now define the gadget $D_n(x_1,x_2,\ldots,x_n)$ for $n \geq 3$ as the union of $D_2(x_1,x_2)$, $D_2(x_2,x_3)$, ..., $D_2(x_{n-1},x_n)$,
where the copies of $T$ are disjoint.
Note that $d^+(x_1)=0$ and $d^+(x_i) \leq k-1$ for all $i=2,3,\ldots,n$ (in fact $d^+(x_i)=0$ if $p<2k-1$ and $d^+(x_i)=k-1$ otherwise).

We will now reduce an instance of {\sc $p$-Colourability} to an instance of {\sc $k$-max-out-degree-reducing $p$-partition}.
Let $G$ be a  graph with vertex set $v_1, \dots, v_n$. We will now construct a digraph $D$ as follows. 
For each vertex $v_i \in V(G)$ we let $D^i$ be a copy of $D_n(x^i_1,x^i_2,\ldots,x^i_n)$.
For each edge $v_iv_j$ of $G$ with $i<j$ add an arc from $x^i_j$ to $x^i_j$.
Observe that the set of arcs added by this operation are disjoint, so the resulting digraph $D$ has out-degree at most $k$.
Consequently, every $k$-max-out-degree-reducing $p$-partition  and every $k$-max-out-degree-reducing $p$-partition  of $D$
is equivalent to proper $p$-colouring of the underlying graph $UG(D)$ of $D$. 

Hence to prove the theorem, it is enough to show that $UG(D)$ has a proper $p$-colouring if and only if $G$ does. 
But this follows directly from the following claim.

\begin{claim}\label{claimA}
In any $p$-colouring of $UG(D_n(x_1,x_2,\ldots,x_n))$, all the vertices in $\{x_1,x_2,\ldots,x_n\}$ must be coloured the same.
Furthermore, there exists a $p$-colouring of $UG(D_n(x_1,x_2,\ldots,x_n))$.
\end{claim}

\noindent {\em Proof of Claim~\ref{claimA}.} We show Claim~\ref{claimA} is true when $n=2$ and then note that this implies that Claim~\ref{claimA} is true for all $n$. 
Let $n=2$.  As $x_1$ and $x_2$ are the only non-adjacent vertices in $D_2(x_1,x_2)$ and
$|V(D_2(x_1,x_2))|=p+1$ we note that $x_1$ and $x_2$ must have the same colour in a proper $p$-colouring of $UG(D_2(x_1,x_2))$.
Conversely if $x_1$ and $x_2$ have the same colour all other vertices of $D_2(x_1,x_2)$ can be given a distinct colour in order to 
obtain a proper $p$-colouring of the underlying graph.  This proves Claim~A when $n=2$. 

When $n \geq 3$ we note by the above that $x_1$ and $x_2$ must be in the same partite set. Analogously $x_2$ and $x_3$ must be in the same partite set. 
Continuing this process we obtain the desired result for $n \geq 3$. This completes the proof of Claim~\ref{claimA}.
\hfill $\Diamond$
\end{proof}

\section{Remarks and open questions}
%%%%%%%%%%%%%%%%%%%

A {\bf majority $k$-colouring} of a digraph $D=(V,A)$ is a
$k$-colouring of the vertices of $V$ so that each vertex $v$ has at
most $\frac{d^+(v)}{2}$ out-neighbours with the same colour as
itself. It is shown in \cite{kreutzerEJC24} that every digraph has a
majority 4-colouring and the authors conjecture that, in fact, every
digraph has a majority 3-colouring. They also asked about the
complexity of deciding whether a digraph has a majority
2-colouring. Since a 3-out-regular digraph has a majority 2-colouring
if and only if it has a $(\Delta^+\leq 1,\Delta^+\leq 1)$-partition
the following is an immediate consequence of Theorem \ref{degrees3}.

\begin{theorem}
Deciding whether a digraph has a majority $2$-colouring is ${\cal
  NP}$-complete even when the input is $3$-out-regular and strongly
connected.
\end{theorem}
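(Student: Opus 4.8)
Proof proposal.

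The plan is to observe that, for a $3$-out-regular digraph, a majority $2$-colouring is precisely a $(\Delta^+\leq 1,\Delta^+\leq 1)$-partition, and then invoke Theorem~\ref{degrees3}. First I would dispatch membership in ${\cal NP}$: given a $2$-colouring $c:V\to\{1,2\}$ as a certificate, one checks in polynomial time that every vertex $v$ has at most $\lfloor d^+(v)/2\rfloor$ out-neighbours $u$ with $c(u)=c(v)$, which is exactly the defining condition of a majority $2$-colouring. Hence the problem is in ${\cal NP}$, and this remains true when the input is restricted to $3$-out-regular strongly connected digraphs.

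For the hardness, I would restrict attention to a $3$-out-regular digraph $D$. For every vertex $v$ we have $d^+(v)=3$, so the constraint ``$v$ has at most $\frac{d^+(v)}{2}$ out-neighbours with the same colour as itself'' reads ``$v$ has at most $\tfrac{3}{2}$ same-coloured out-neighbours'', i.e. at most $1$ same-coloured out-neighbour since this is an integer count. Writing $V_i=c^{-1}(i)$ for $i\in\{1,2\}$, this says exactly that $d^+_{\induce{D}{V_i}}(v)\le 1$ for every $v\in V_i$, that is, $\Delta^+(\induce{D}{V_i})\le 1$ for $i\in\{1,2\}$. Thus $D$ has a majority $2$-colouring if and only if $(V_1,V_2)$ is a $(\Delta^+\leq 1,\Delta^+\leq 1)$-partition, and the two decision problems coincide on the class of $3$-out-regular digraphs.

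It then remains to apply Theorem~\ref{degrees3} with $k=1$ and $p=3$: since $p=3\ge k+2=3$, that theorem states that {\sc $(\Delta^+\leq 1,\Delta^+\leq 1)$-Partition} is ${\cal NP}$-complete on the class of strong $3$-out-regular digraphs. Combining this with the equivalence of the previous paragraph yields that deciding whether a digraph has a majority $2$-colouring is ${\cal NP}$-complete even when the input is $3$-out-regular and strongly connected. There is no real obstacle here; the only point requiring a little care is checking that the parameters line up ($k=1$ is a positive integer and $p=3$ satisfies $p\ge k+2$), so that Theorem~\ref{degrees3} is indeed applicable, and that the class of strong $3$-out-regular digraphs is preserved throughout — which it is, since we never modify the instance.
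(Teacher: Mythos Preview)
Your proposal is correct and matches the paper's own argument: the paper states the theorem as an immediate consequence of Theorem~\ref{degrees3} via the observation that a $3$-out-regular digraph has a majority $2$-colouring if and only if it has a $(\Delta^+\leq 1,\Delta^+\leq 1)$-partition. Your added remark on membership in ${\cal NP}$ is fine and the parameter check ($k=1$, $p=3\ge k+2$) is exactly what is needed.
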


\bigskip

In all our $\cal NP$-completeness proofs above on out-regular digraphs
these are far from being also in-regular. Thus it is natural to ask about
the complexity in the case of regular digraphs.

\begin{problem}
\label{k1k2regular}
What is the complexity of the $(\Delta^+\leq k_1,\Delta^+\leq
k_2)$-partition problem for $(\max\{k_1,k_2\}+1)$-regular digraphs
when $k_1<k_2$?
\end{problem}

\begin{problem}
\label{kkregular}
What is the complexity of the $(\Delta^+\leq k,\Delta^+\leq
k)$-partition problem for $(k+2)$-regular digraphs?
\end{problem}

Theorem \ref{thm:struc-k1-p1} implies that Problem \ref{kkregular}
becomes polynomial-time solvable if we replace $(k+2)$-regular by $(k+1)$-regular
and that when $k\geq 2$ a $(\Delta^+\leq k,\Delta^+\leq k)$-partition
always exists in every $(k+1)$-regular digraph as, by a result of
Thomassen \cite{thomassenJAMS5}, these all have an even directed cycle
(see also \cite[Theorem 8.3.7]{bang2009}).

\vspace{2mm}

Finally we can also ask about 2-partitions where the maximum
out-degree is reduced in one part whereas it is the maximum in-degree
that must be reduced in the other part.
\begin{problem}
What is the complexity of the $(\Delta^+\leq k_1,\Delta^-\leq
k_2)$-partition problem?
\end{problem}

\bigskip

In this paper, we studied partitions such that the out-degree in (the digraph induced by) each part is $k$ smaller than the out-degree in the whole digraph for some value $k$ which is fixed and the same for each part. It would be interesting to study the analogous problem where $k$ depends on the part.
In this vein Alon proved the following result.
\begin{theorem}[\cite{alonCPC15}]
\label{alonsplit}
Let $D$ be a digraph of maximum out-degree $\Delta^+$ and let
$d_1,d_2,\ldots{},d_p$ non-negative integers satisfying
$d_1+d_2+\ldots{}+d_p+(p-1)\geq 2\Delta^+$. Then $D$ has a
$p$-partition $(V_1,V_2,\ldots{},V_p)$ such that
$\Delta^+(\induce{D}{V_i})\leq d_i$.
\end{theorem}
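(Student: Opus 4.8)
\textbf{Proof proposal for Theorem~\ref{alonsplit}.}

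The plan is to prove this by a greedy/local-switching argument on $p$-partitions, choosing the partition that minimizes a suitable potential and showing that a violating vertex allows a local improvement. First I would fix a $p$-partition $(V_1,\dots,V_p)$ that minimizes $\Phi(V_1,\dots,V_p) := \sum_{i=1}^p \sum_{v\in V_i} \max\{0,\, d^+_{\induce{D}{V_i}}(v) - d_i\}$; since $\Phi$ takes non-negative integer values, a minimizer exists. I would then argue that for this minimizer $\Phi = 0$, which is exactly the conclusion. Suppose not: then some part $V_i$ contains a vertex $v$ with $d^+_{\induce{D}{V_i}}(v) > d_i$. The idea is to move $v$ to whichever part $V_j$ ($j\ne i$) minimizes the number of out-neighbours of $v$ in that part, and check that $\Phi$ strictly drops.

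The key counting step: $v$ has at most $\Delta^+$ out-neighbours total, of which more than $d_i$ lie in $V_i$, so at most $\Delta^+ - d_i - 1$ lie outside $V_i$; hence for the best alternative part $V_j$ we have $d^+(v, V_j) \le \lfloor (\Delta^+ - d_i - 1)/(p-1) \rfloor$. Moving $v$ from $V_i$ to $V_j$ changes $\Phi$ in three ways: (a) $v$'s own contribution drops from something $\ge 1$ to $\max\{0, d^+(v,V_j)-d_j\}$; (b) every out-neighbour of $v$ in $V_i$ has its induced out-degree unchanged (out-neighbours, not in-neighbours, so nothing changes for them — wait, I must be careful: moving $v$ out of $V_i$ decreases $d^+_{\induce{D}{V_i}}(u)$ only if $u\to v$, i.e. for \emph{in}-neighbours of $v$ in $V_i$; it never increases any such value); (c) in-neighbours of $v$ in $V_j$ gain $1$ in their induced out-degree, potentially increasing $\Phi$. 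So the net effect is: we save at least $1$ from $v$ itself, we save a non-negative amount from part $V_i$, and we may lose up to $d^+(v,V_j)$ from part $V_j$ (one per in-neighbour of $v$ in $V_j$ whose threshold is already met). Here is where the hypothesis $d_1+\dots+d_p+(p-1)\ge 2\Delta^+$ must be used, and this coupling between the $V_i$-saving and the $V_j$-loss is the main obstacle: a crude bound $d^+(v,V_j)\le \lfloor(\Delta^+ - d_i-1)/(p-1)\rfloor$ need not be $< 1$. The fix is to sum the potential argument globally rather than vertex-by-vertex, or to choose the move minimizing $d^+(v,V_j) - (\text{in-neighbours of } v \text{ in } V_j \text{ below threshold})$ — more robustly, I would average over all $j\ne i$: the total over $j\ne i$ of $d^+(v,V_j)$ is $d^+_D(v) - d^+_{\induce{D}{V_i}}(v) \le \Delta^+ - d_i - 1$, while the total over all parts of the "slack" $\sum_i d_i$ is large by hypothesis, so some target part $V_j$ has both few out-neighbours of $v$ and enough slack to absorb the newly created arcs.

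Concretely, I would make the potential finer: instead of the hinged sum, minimize $\Phi'(V_1,\dots,V_p) := \sum_{i} \max\{0, \Delta^+(\induce{D}{V_i}) - d_i\}$ broken ties by total excess, or better, follow Alon's original discharging: consider $\Psi := \sum_i e(\induce{D}{V_i})$ weighted so that an arc in $V_i$ costs $w_i$ with $w_i$ increasing in how far $V_i$ is above its target, and show a minimizer has every $\induce{D}{V_i}$ with max out-degree $\le d_i$. The cleanest route, which I expect to work, is: pick the partition minimizing $\sum_i \big(\text{number of vertices } v \text{ in } V_i \text{ with } d^+_{\induce{D}{V_i}}(v) > d_i\big)$ with lexicographic tie-break by $\Phi$; if a bad vertex $v\in V_i$ exists, it has $\le \Delta^+-d_i-1$ out-neighbours outside $V_i$ distributed among $p-1$ parts, so by pigeonhole and the inequality $(p-1)(d_j+1) \ge (p-1) + \sum_{\ell\ne j} d_\ell \ge \dots$ rearranged from the hypothesis, some $V_j$ receives at most $d_j$ of these out-neighbours of $v$; moving $v$ there makes $v$ good without creating any new bad vertex in $V_j$ (its induced out-degree becomes $\le d_j$) and, since moving $v$ out of $V_i$ only lowers induced out-degrees inside $V_i$, creates no new bad vertex anywhere — strictly decreasing the count of bad vertices, contradicting minimality. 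Verifying that the pigeonhole threshold $d^+(v,V_j)\le d_j$ is \emph{achievable} for some $j$ from the arithmetic $d_1+\dots+d_p+(p-1)\ge 2\Delta^+$ is the one real computation, and it is the step I would write out carefully; everything else is the standard "minimal counterexample admits a local improvement" bookkeeping.
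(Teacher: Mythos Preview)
First, a clarification about the paper: Theorem~\ref{alonsplit} is not proved in this paper at all. It is quoted, with attribution to Alon~\cite{alonCPC15}, in the concluding remarks, and no proof is supplied. So there is no ``paper's own proof'' to compare your proposal against. The only related argument the paper does give is the short degeneracy proof of Proposition~\ref{prop:Alon-gen} (pick $k$ out-arcs per vertex, observe the resulting graph is $2k$-degenerate, properly $(2k+1)$-colour it), which treats the special case $d_1=\cdots=d_p$ and $p=2k+1$.

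Now to your proposal. The pigeonhole computation is fine: if $v\in V_i$ is bad, i.e.\ $d^+_{\induce{D}{V_i}}(v)\ge d_i+1$, then assuming $d^+(v,V_j)\ge d_j+1$ for every $j\ne i$ and summing gives $\sum_{j}d_j+(p-1)\le \Delta^+-1$, contradicting the hypothesis. So a target part $V_j$ with $d^+(v,V_j)\le d_j$ always exists.

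The gap is in your final paragraph, precisely at the sentence ``moving $v$ there makes $v$ good without creating any new bad vertex in $V_j$ \dots\ creates no new bad vertex anywhere''. You yourself identified the problem earlier under item~(c): every \emph{in}-neighbour $u$ of $v$ lying in $V_j$ has its induced out-degree increased by~$1$ when $v$ joins $V_j$. If such a $u$ was tight (i.e.\ $d^+_{\induce{D}{V_j}}(u)=d_j$), it becomes bad. Nothing in your choice of $j$ prevents this: you chose $j$ to make $d^+(v,V_j)$ small, but $d^-(v,V_j)$ is uncontrolled, and since $d^-_D(v)$ can be arbitrarily large, there is no pigeonhole bound on how many tight in-neighbours of $v$ sit in $V_j$. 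So the count of bad vertices need not drop, and none of the potentials you wrote down ($\Phi$, $\Phi'$, the lexicographic tie-break) is shown to strictly decrease under this move. The ``standard minimal-counterexample bookkeeping'' you allude to does not close by itself here.

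What is missing is exactly the mechanism that exploits the factor~$2$ in the hypothesis $\sum d_i+(p-1)\ge 2\Delta^+$ (as opposed to the undirected Lov\'asz bound $\sum d_i+(p-1)\ge \Delta$). That factor is there to pay for the in-arcs as well as the out-arcs: one has to choose $V_j$ so as to control \emph{both} $d^+(v,V_j)$ and the dangerous in-neighbours simultaneously, or else pick the vertex to move (not necessarily the bad vertex itself) so that its total in- plus out-degree into the target part is small. Alon's argument in~\cite{alonCPC15} does this; the special case in Proposition~\ref{prop:Alon-gen} sidesteps it entirely via degeneracy of the underlying graph. Your write-up needs one of these two ideas made explicit; as it stands, the proposal has the right skeleton and the right arithmetic, but the local-improvement step is not actually an improvement.
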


\end{document}